\documentclass[aps,prl,twocolumn,superscriptaddress,nofootinbib,longbibliography]{revtex4-2}
\usepackage{amsmath,amssymb,amsthm,mathtools,bm}
\usepackage{microtype}
\usepackage[hidelinks]{hyperref}
\hypersetup{pdfauthor={Anubhav Chaturvedi},pdftitle={No Finite NPA Level Characterizes the Complete Quantum Set in the Simplest Bell Scenario}}
\newtheorem{theorem}{Theorem}
\newtheorem{proposition}[theorem]{Proposition}
\newtheorem{lemma}[theorem]{Lemma}
\newtheorem{corollary}[theorem]{Corollary}
\newtheorem{definition}[theorem]{Definition}
\theoremstyle{remark}

\newcommand{\R}{\mathbb R}
\newcommand{\C}{\mathbb C}
\newcommand{\id}{\mathbf 1}
\newcommand{\Tr}{\operatorname{Tr}}
\newcommand{\Q}{\mathcal Q}
\newcommand{\Span}{\operatorname{span}}
\newcommand{\Newt}{\operatorname{Newt}}

\begin{document}

\title{No Finite NPA Level Characterizes the Complete Quantum Set in the Simplest Bell Scenario}

\author{Anubhav Chaturvedi}
\affiliation{Division of Quantum Optics and Information, Institute of Theoretical Physics and Astrophysics, Faculty of Mathematics, Physics and Informatics, University of Gda\'nsk, 80-308 Gda\'nsk, Poland}

\date{July 14, 2026}

\begin{abstract}
The Navascu\'es--Pironio--Ac\'in (NPA) hierarchy gives the standard semidefinite outer approximations to quantum behaviors. Whether \emph{any} finite level can already equal the quantum set has remained open even in the bipartite scenario with two binary measurements per party. We demonstrate that \emph{no finite level} is exact. For the symmetric doubly tilted CHSH functional
$h_\alpha=A_0B_0+A_0B_1+A_1B_0-A_1B_1+\alpha(A_0+B_0)$, set $T=1-\alpha$. Its quantum maximum satisfies
$[\omega_{\rm Q}(1-T)-(4-2T)]/T^3\to4/3$, whereas every fixed NPA level satisfies $[\omega_L(1-T)-\omega_{\rm Q}(1-T)]/T^3\to+\infty$. Under the corresponding boundary rescaling, an explicit expectation of the positive operator $\omega_{\rm Q}(1-t^2)I-H_t$ converges to the Motzkin polynomial. A bounded fixed-level error would therefore make the Motzkin polynomial plus a nonnegative constant a sum of squares, which is impossible. Consequently, every standard NPA relaxation based on a fixed finite list of words in the measurement projectors strictly contains the complete quantum set, and its nonquantum behaviors accumulate at a local deterministic behavior. Thus, the finite-level exactness of CHSH and all one-sided tilted CHSH maxima does not extend to an exact finite-level description of the complete quantum set in the minimal scenario.
\end{abstract}

\maketitle

\emph{Introduction.} Which correlations between distant measurements are compatible with quantum theory? The Navascu\'es--Pironio--Ac\'in (NPA) hierarchy addresses this question through positive semidefinite moment matrices~\cite{NPA2007,NPA2008}. Increasing the level gives nested outer approximations that converge asymptotically in the commuting-operator model~\cite{NPA2008}. If one finite level were exact, the complete quantum set in that scenario would admit a single exact semidefinite description. We demonstrate that this does not happen even in the smallest nontrivial Bell scenario.

The possibility of finite termination is especially plausible here. Level $1$ already gives Tsirelson's exact CHSH value $2\sqrt2$~\cite{CHSH1969,Tsirelson1980,NPA2007}. The intermediate level $1+AB$, obtained by adjoining all cross-party products $A_xB_y$ to the level-$1$ list, gives the exact quantum maximum of every one-sided tilted CHSH functional~\cite{AMP2012,BampsPironio2015,GigenaEtAl2025}. The four-correlator projection admits an analytical characterization~\cite{Landau1988,LeEtAl2023}, and the extremal points of the complete eight-dimensional set, including the four marginals, have been characterized~\cite{MikosKaniewski2023,BarizienBancal2025}. Recent work has also examined how the levels $1+AB$ and $2$ reflect extremality in this scenario~\cite{Ishizaka2025}. Together, they characterize several projections, extremal points, and individual Bell maxima. They do not imply that a finite NPA level equals the complete behavior set. Our result separates these two questions.

\emph{The doubly tilted CHSH family.} Alice and Bob choose inputs $x,y\in\{0,1\}$ and obtain outcomes $a,b\in\{\pm1\}$. A no-signaling behavior is specified by four marginals and four correlators through
\begin{equation}
p(a,b|x,y)=\frac14\left[1+a\langle A_x\rangle+b\langle B_y\rangle+ab\langle A_xB_y\rangle\right].
\label{eq:behavior}
\end{equation}
Let $\Q$ denote the set of quantum behaviors. We consider
\begin{align}
h_\alpha={}&A_0B_0+A_0B_1+A_1B_0-A_1B_1\nonumber\\
&+\alpha(A_0+B_0),\qquad 0\leq\alpha<1,
\label{eq:bell}
\end{align}
whose local maximum is $2+2\alpha$. These marginal terms arise naturally in loophole-free CHSH tests when no-click events are assigned ordinary outcomes~\cite{GigenaEtAl2025}. Reference~\cite{GigenaEtAl2025} derived the quantum maximum $\omega_{\rm Q}(\alpha)$ and its optimal two-qubit realization analytically, and proved self-testing of every maximizer for $0\leq\alpha<1$. Along either one-sided tilt axis, the level-$1+AB$ bound is exact. On the symmetric branch, as $\alpha\uparrow1$, the maximizing state approaches a product state, both measurement pairs become compatible, and the Bell violation vanishes. We prove that every fixed NPA level nevertheless fails to be exact near this endpoint, and that its error dominates the vanishing quantum advantage.

In the observable formulation, $A_x^2=B_y^2=\id$ and Alice's observables commute with Bob's. Let $\mathcal W_L$ be the standard level-$L$ list of reduced products of at most $L$ observables. The level-$L$ NPA bound is
\begin{equation}
\omega_L(\alpha)=\max\left\{\ell(h_\alpha):\ell(\id)=1,
\bigl(\ell(u^*v)\bigr)_{u,v\in\mathcal W_L}\succeq0\right\}.
\label{eq:level}
\end{equation}
We also consider any moment relaxation obtained from a fixed finite list of words in the measurement projectors, using only moment-matrix positivity and the linear identities generated by projectivity, orthogonality, completeness, Hermitian conjugation, and Alice--Bob commutation. These are the standard NPA constraints~\cite{NPA2007,NPA2008}. Relaxations supplemented by state- or operator-optimality conditions are not included~\cite{AraujoEtAl2026}.

\begin{theorem}
\label{thm:main}
For every finite $L\geq1$,
\begin{equation}
\lim_{\alpha\uparrow1}
\frac{\omega_L(\alpha)-\omega_{\rm Q}(\alpha)}{(1-\alpha)^3}=+\infty.
\label{eq:divergence}
\end{equation}
If $\mathcal N_{\mathcal S}$ is the feasible behavior set obtained from any fixed finite list $\mathcal S$ of words in the measurement projectors using only the standard NPA constraints, then
\begin{equation}
\Q\subsetneq\mathcal N_{\mathcal S}.
\label{eq:set-separation}
\end{equation}
\end{theorem}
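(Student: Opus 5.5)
The argument runs by contradiction through a boundary rescaling in which the problem degenerates to a polynomial positivity question, with the Motzkin polynomial as the witness.

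\emph{Step 1: quantum asymptotics.} Set $T=1-\alpha$. Starting from the analytic two-qubit maximizer of Ref.~\cite{GigenaEtAl2025}, I would expand $\omega_{\rm Q}(1-T)$ in $T$ to get
\[
\omega_{\rm Q}(1-T)=(4-2T)+\tfrac43T^3+o(T^3).
\]
The leading term $4-2T$ is the local deterministic value $2+2\alpha$. So the quantum advantage is exactly of order $T^3$, and the divergence \eqref{eq:divergence} says that every fixed level overshoots by strictly more than order $T^3$.

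\emph{Step 2: level-$L$ feasibility implies a sum-of-squares certificate.} Suppose, for contradiction, that $\omega_L(1-T)-\omega_{\rm Q}(1-T)\le CT^3$ along a sequence $T\to0$.
\begin{itemize}
\item Since $\omega_L$ is the value of an SDP satisfying strong duality (Slater holds, because the moment matrix of a full-rank state on generic observables is strictly positive), the operator $(\omega_{\rm Q}+CT^3)\id-h_\alpha$ is a sum of squares of elements of $\Span\mathcal W_L$, modulo the ideal generated by $A_x^2=B_y^2=\id$ and $[A_x,B_y]=0$.
\item I would parametrize near the deterministic point by writing $A_x=\cos\theta_x\,Z+\sin\theta_x\,X$-type observables, scaled as $A_1\approx$ an $O(t)$ rotation of $A_0$, with $t=T^{1/2}$ or whatever scaling the abstract's ``$H_t$'' indicates.
\item The key object is an explicit family of states (a product state perturbed at order $t$) together with observables depending on two real parameters $(x,y)$. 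For this family, $t^{-6}\langle\omega_{\rm Q}(1-t^2)\id-H_t\rangle$ converges to the Motzkin polynomial $M(x,y)=x^4y^2+x^2y^4-3x^2y^2+1$. Finding this family is essentially guided by matching the $T^3$ coefficient $4/3$ against the AM--GM structure of Motzkin.
\item Apply the level-$L$ SOS identity and evaluate it on this family rather than just taking expectations. Each square $s^*s$ with $s\in\Span\mathcal W_L$ has bounded degree. After rescaling and taking the limit (using a compactness argument on the normalized coefficients, since the degree is bounded by $L$), this yields an identity $M(x,y)+c=\sum_j q_j(x,y)^2$ with $c\ge0$ bounded by $C$ and the $q_j$ real polynomials.
\end{itemize}

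\emph{Step 3: the contradiction, and the main obstacle.} It is a classical fact that $M+c$ is not a sum of squares for any $c\ge0$: its Newton polytope forces every square's monomials into $\{1,xy,x^2y,xy^2\}$, and the coefficient of $x^2y^2$ then cannot be $-3$. This contradicts Step 2, which proves \eqref{eq:divergence}.

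The hard part is the passage to the limit in Step 2. I must show that the noncommutative SOS certificate of bounded length descends, under the rescaling, to a commutative SOS for $M$ plus a constant. This means controlling cancellations between lower-order terms in $t$ so that the limiting squares are genuine polynomials of bounded degree. I would try to do this by substituting $2\times2$ representations in which the commutators are $O(t)$, expanding each word, and keeping only the leading homogeneous parts after weighting $x,y$ appropriately.

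\emph{Step 4: set separation.} For an arbitrary finite list $\mathcal S$ of projector words, $\mathcal S$ is contained in the words of some level $L$ after rewriting projectors as $(\id\pm A)/2$, so $\mathcal N_{\mathcal S}\supseteq\mathcal N_L$. Divergence then gives some $\alpha$ with $\omega_{\mathcal S}(\alpha)\ge\omega_L(\alpha)>\omega_{\rm Q}(\alpha)$. A maximizer of $h_\alpha$ over $\mathcal N_{\mathcal S}$ is then a feasible behavior outside $\Q$, which gives $\Q\subsetneq\mathcal N_{\mathcal S}$. Letting $\alpha\uparrow1$, these nonquantum behaviors accumulate at the local deterministic point.
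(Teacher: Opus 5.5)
\textbf{Verdict: genuine gap.} Your outline has the same architecture as the paper: cubic asymptotics of $\omega_{\rm Q}$, an attained level-$L$ certificate, evaluation in a rescaled qubit representation on a test vector, a Motzkin limit, the Newton-polytope obstruction, and a congruence argument for $\mathcal N_{\mathcal S}$. However, the step you flag as the hard part is left open, and the ingredient that closes it is missing.

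Evaluated on a vector, the certificate $\omega_L(\alpha_n)\id-h_{\alpha_n}=\bm w^*Q_n\bm w$ becomes $\|(Q_n^{1/2}\otimes I_4)z_{t_n}(x,y)\|^2$, where $z_t$ stacks the vectors $\pi_{t,x,y}(w)\xi_t(x,y)$, $w\in\mathcal W_L$. The individual squares cannot cancel against each other, so the issue is not cancellation but uniformity in $n$. To obtain polynomials you must replace $z_{t_n}$ by its Taylor polynomial through order $t^3$, which lies in one fixed finite-dimensional space of polynomials in $(x,y)$. After rescaling, the error is $t_n^{-3}\|Q_n^{1/2}\|\,O_K(t_n^4)$, and this vanishes only if $\sup_n\|Q_n\|<\infty$.

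Nothing in your plan provides this bound, and normalizing the coefficients does not replace it. If $\Tr Q_n\to\infty$, the normalized rescaled squared norms tend to $0$, and the limiting identity degenerates to $0=\|q\|^2$. That loses the constant term that drives the contradiction.

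The paper obtains the bound from the trace identity. The canonical trace $\tau$ (coefficient of the identity word) satisfies $\tau(u^*v)=\delta_{u,v}$ on reduced words and $\tau(h_\alpha)=0$, so $\Tr Q_n=\omega_L(\alpha_n)\le6$. The truncated polynomial vectors are then bounded on $[-1,1]^2$, hence have bounded coefficients. A subsequence converges to some $q$, and $\frac43M(x/2,y/2)+r=\|q\|^2$.

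The same functional gives Slater directly, since $\Gamma_L(\tau)=I$. Your ``full-rank state on generic observables'' argument needs a representation large enough that all words in $\mathcal W_L$ act linearly independently. That fails on two qubits once $|\mathcal W_L|>16$.

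Three further points need repair.
\begin{itemize}
\item \emph{The test vector is asserted, not produced, and its form matters.} It must depend polynomially on $(x,y)$, so that the Taylor coefficients of $\pi(w)\xi_t$ are polynomials. Within the ansatz $(1,At^3,Bt^3,Ct^2)^T$ for observables rotated by $tx$ and $ty$, cancelling the $t^4$ term and both square terms at order $t^6$ forces $\xi_t=(1,t^3x^2y/12,t^3xy^2/12,-t^2xy/4)^T$. The perturbation enters at orders $t^2$ and $t^3$, not $t$, and the limit is $\frac43M(x/2,y/2)$ rather than $M(x,y)$.
\item \emph{Step~4 needs $L\ge2$.} A level-$L$ optimizer yields a behavior in $\mathcal N_{\mathcal S}$ only if its probabilities are nonnegative. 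This holds once $L\ge2$, because $P=E_{a|x}F_{b|y}$ is then a projector in $V_L$ and $\ell(P)=\ell(P^*P)\ge0$.
\item \emph{The accumulation claim is unsupported.} Your closing remark that these nonquantum maximizers accumulate at the local deterministic point does not follow from anything you prove. At $\alpha=1$, other deterministic assignments, e.g.\ $A_0=A_1=B_0=1$, $B_1=-1$, also attain $4$. The paper gets accumulation differently: it mixes $p_{\rm loc}$ with a level-$d$ optimizer using weight $2\Delta_\alpha/(\Delta_\alpha+e_{d,\alpha})\to0$, which relies on the relative divergence.
\end{itemize}
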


To prove Theorem~\ref{thm:main}, we first retrieve the exact endpoint scale from the quantum value. At this scale, an explicit expectation of a positive operator converges to the Motzkin polynomial. We then use dual attainment and the trace identity to obtain a compact family of fixed-level Gram matrices. Any bounded cubic-scale error would consequently force a polynomial sum-of-squares representation of the Motzkin limit. The Newton polytope of the Motzkin polynomial excludes such a representation. The complete proof is given in the Supplemental Material.

\emph{Exact endpoint limits.} A common projective dilation followed by Jordan's lemma~\cite{Halmos1969} reduces the quantum optimization to $4\times4$ blocks,
\begin{equation}
\omega_{\rm Q}(1-T)=\max_{c,d\in[-1,1]}\lambda_{\max}H_T(c,d),
\label{eq:block-reduction}
\end{equation}
where $c$ and $d$ are the cosines of the two local measurement angles. The block determinant directly implies that $(1-c)/T$, $(1-d)/T$, and $[\omega_{\rm Q}(1-T)-(4-2T)]/T^3$ remain bounded along maximizing blocks; the derivation is given in the Supplemental Material. The exact quartic equation for $\omega_{\rm Q}$ from Ref.~\cite{GigenaEtAl2025} then fixes the coefficients without numerical fitting, whilst analyticity supplies a controlled remainder. The root equal to $4$ at $T=0$ is simple, and coefficient matching gives
\begin{align}
\lim_{T\downarrow0}
\frac{\omega_{\rm Q}(1-T)-(4-2T)}{T^3}
&=\frac43,
\label{eq:endpoint}\\
\lim_{T\downarrow0}
\frac{\omega_{\rm Q}(1-T)-\left(4-2T+\frac43T^3\right)}{T^4}
&=-\frac49.
\label{eq:endpoint-next}
\end{align}
The Supplemental Material proves the analytic expansion
$\omega_{\rm Q}(1-T)=4-2T+\frac43T^3-\frac49T^4+\frac4{81}T^5+R_6(T)$ with $|R_6(T)|\leq CT^6$ for sufficiently small $T$. It also proves that the common optimal angle $\theta_*(T)$ satisfies $\theta_*(T)/(2\sqrt T)\to1$. Since the local value is $4-2T$, Eq.~\eqref{eq:endpoint} gives the exact cubic coefficient of the quantum advantage.

\emph{The Motzkin polynomial.} The cubic advantage fixes the boundary scale. Write $T=t^2$ and vary the two local angles independently,
\begin{align}
A_0&=Z,&A_1&=\cos(tx)Z+\sin(tx)X,\nonumber\\
B_0&=Z,&B_1&=\cos(ty)Z+\sin(ty)X.
\label{eq:scaled-observables}
\end{align}
Let $H_t(x,y)$ be the matrix of $h_{1-t^2}$ in this representation. For a vector $\zeta_t=(1,At^3,Bt^3,Ct^2)^T$, the coefficient of $t^4$ in $\langle\zeta_t,[\omega_{\rm Q}(1-t^2)I-H_t]\zeta_t\rangle$ is $(4C+xy)^2/4$. Its vanishing is necessary for a finite limit after division by $t^6$, and fixes $C=-xy/4$. At order $t^6$, completing squares gives
\begin{align}
&6\left(A-\frac{x^2y}{12}\right)^2
+6\left(B-\frac{xy^2}{12}\right)^2\nonumber\\
&\hspace{7mm}+\frac43+\frac{x^4y^2+x^2y^4}{48}-\frac{x^2y^2}{4},
\label{eq:completion}
\end{align}
so the unique choice cancelling both squares is
\begin{equation}
\xi_t(x,y)=\left(1,\frac{t^3x^2y}{12},\frac{t^3xy^2}{12},-\frac{t^2xy}{4}\right)^T.
\label{eq:xi}
\end{equation}
Using the exact endpoint remainder, we obtain, locally uniformly in $(x,y)$,
\begin{equation}
t^{-6}\langle\xi_t,[\omega_{\rm Q}(1-t^2)I-H_t(x,y)]\xi_t\rangle
\longrightarrow\frac43M\left(\frac{x}{2},\frac{y}{2}\right),
\label{eq:motzkin-limit}
\end{equation}
where
\begin{equation}
M(X,Y)=X^4Y^2+X^2Y^4-3X^2Y^2+1.
\label{eq:motzkin}
\end{equation}
The arithmetic--geometric mean inequality gives $M\geq0$, while $M$ is not a sum of squares of real polynomials~\cite{Hilbert1888,Motzkin1967}. Equation~\eqref{eq:motzkin-limit} identifies the polynomial that enters the fixed-level contradiction.

\emph{Fixed-level obstruction.} At level $L$, strict primal feasibility and finite-dimensional semidefinite duality give an attained Gram representation
\begin{equation}
\omega_L(\alpha)\id-h_\alpha=\bm w^*Q\bm w,
\qquad Q\succeq0,
\label{eq:dual}
\end{equation}
where $\bm w$ contains the words in $\mathcal W_L$; dual attainment is proved in the Supplemental Material. Let $\tau$ extract the coefficient of the identity word. Reduced words are orthonormal for this functional, so applying $\tau$ to Eq.~\eqref{eq:dual} yields the exact trace identity
\begin{equation}
\Tr Q=\omega_L(\alpha).
\label{eq:trace}
\end{equation}
Every marginal and correlator has magnitude at most one in a feasible moment matrix, hence $\omega_L(\alpha)\leq6$. The Gram matrices of optimal certificates therefore remain bounded as $\alpha\uparrow1$.

Suppose the quotient in Eq.~\eqref{eq:divergence} were bounded along $\alpha_n=1-t_n^2$. After taking a subsequence,
\begin{equation}
\omega_L(1-t_n^2)=\omega_{\rm Q}(1-t_n^2)+r_nt_n^6,
\qquad r_n\to r\geq0.
\label{eq:bounded-excess}
\end{equation}
Evaluate Eq.~\eqref{eq:dual} in the representation~\eqref{eq:scaled-observables} on $\xi_{t_n}(x,y)$ and divide by $t_n^6$. The left side converges locally uniformly to $\frac43M(x/2,y/2)+r$. For fixed $L$, the Taylor coefficients of the evaluated words lie in one finite-dimensional polynomial space; Eq.~\eqref{eq:trace} bounds the corresponding Gram matrices. After taking a subsequence, the rescaled certificate vectors converge coefficientwise in that polynomial space, and their squared norms converge to the squared norm of the limit; the compactness argument is given in the Supplemental Material. Consequently,
\begin{equation}
\frac43M\left(\frac{x}{2},\frac{y}{2}\right)+r
\label{eq:sos-limit}
\end{equation}
is a sum of squares of real polynomials.

This is impossible. The Newton polytope of $M+c$, for any $c\geq0$, is the triangle with vertices $(0,0)$, $(4,2)$, and $(2,4)$. Every factor in a square decomposition is supported on integer points of half this triangle~\cite{Reznick1978}; these points are
\[
(0,0),\quad(1,1),\quad(2,1),\quad(1,2).
\]
Each factor therefore has the form $a+bXY+cX^2Y+dXY^2$. The only product of allowed monomials with exponent $(2,2)$ is $XY\cdot XY$, so the coefficient of $X^2Y^2$ in a sum of such squares is nonnegative. In $M+c$ it is $-3$. This contradiction proves Eq.~\eqref{eq:divergence}.

\emph{From Bell values to behavior sets.} Expand every binary projector as $E_{\pm|x}=(\id\pm a_x)/2$ and $F_{\pm|y}=(\id\pm b_y)/2$. Every word in a fixed finite list then lies in the span of $\mathcal W_d$ for some finite $d\geq2$, and the moment matrix indexed by that list is a congruence of the standard level-$d$ moment matrix. At level $d\geq2$, the joint-event elements $E_{a|x}F_{b|y}$ are projectors. Moment positivity therefore makes their expectations nonnegative, while completeness gives normalization and no-signaling. An optimal level-$d$ moment functional consequently produces a valid behavior whose Bell value is exactly $\omega_d(\alpha)$. For $\alpha$ sufficiently close to one, Eq.~\eqref{eq:divergence} makes this value larger than $\omega_{\rm Q}(\alpha)$. The behavior lies outside $\Q$ but remains feasible for the original finite-list relaxation, proving Eq.~\eqref{eq:set-separation}.

\emph{Consequences.} The exact endpoint limit gives
\begin{equation}
\lim_{\alpha\uparrow1}
\frac{\omega_{\rm Q}(\alpha)-(2+2\alpha)}{(1-\alpha)^3}=\frac43,
\label{eq:advantage}
\end{equation}
and hence
\begin{equation}
\frac{\omega_L(\alpha)-\omega_{\rm Q}(\alpha)}
{\omega_{\rm Q}(\alpha)-(2+2\alpha)}\longrightarrow+\infty
\label{eq:relative}
\end{equation}
for every fixed $L$. The maximizing quantum realization approaches a product state with compatible measurements. Equation~\eqref{eq:relative} concerns the NPA upper bound: its error eventually exceeds the entire quantum advantage by an arbitrarily large factor.

The nonquantum feasible behaviors can also be placed arbitrarily close to a local deterministic behavior.
\begin{corollary}
\label{cor:local-accumulation}
Fix a standard NPA relaxation defined by a finite word list, with feasible set $\mathcal N_{\mathcal S}$. Every neighborhood of $p_{\rm loc}(a,b|x,y)=\delta_{a,1}\delta_{b,1}$ contains a behavior in $\mathcal N_{\mathcal S}\setminus\Q$.
\end{corollary}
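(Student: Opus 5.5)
The plan is to reuse the separating behaviors from the proof of Eq.~\eqref{eq:set-separation} and show that they converge to $p_{\rm loc}$ as $\alpha\uparrow1$. The argument has four steps.

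First, as in the previous paragraph, the fixed finite list $\mathcal S$ is contained, via the projector expansions, in the span of $\mathcal W_d$ for some $d\geq2$. Hence every level-$d$ feasible behavior lies in $\mathcal N_{\mathcal S}$. For each $\alpha$ close to one, I take an optimal level-$d$ moment functional $\ell_\alpha$ and its behavior $p_\alpha$, which has Bell value $\omega_d(\alpha)$. By Eq.~\eqref{eq:divergence}, $\omega_d(\alpha)>\omega_{\rm Q}(\alpha)$ for all $\alpha$ in some interval $(\alpha_0,1)$. On that interval $p_\alpha\in\mathcal N_{\mathcal S}\setminus\Q$.

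Second, I show $p_\alpha\to p_{\rm loc}$. The plan is a squeeze on the eight coordinates in Eq.~\eqref{eq:behavior}. Feasibility gives $|\ell_\alpha(A_x)|,|\ell_\alpha(B_y)|,|\ell_\alpha(A_xB_y)|\leq1$. Also $\omega_d(\alpha)\geq\omega_{\rm Q}(\alpha)\geq 2+2\alpha\to4$, using the local deterministic value. On the other hand, level-$1$ positivity (implied at level $d$) bounds the CHSH combination of the correlators by $2\sqrt2$, and the marginal term $\alpha(\ell(A_0)+\ell(B_0))$ is at most $2\alpha$. That crude estimate, $2\sqrt2+2$, exceeds $4$, so it does not by itself force the limit. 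I need a finer argument.

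The finer argument goes through a limit. Any accumulation point $\ell_*$ of the $\ell_\alpha$ (these lie in a compact set, being bounded by $1$ entrywise) is level-$d$ feasible and, by continuity, satisfies $\ell_*(h_1)\geq4$. The constraint $|\ell_*(\cdot)|\leq1$ then forces $\ell_*(A_0)=\ell_*(B_0)=1$ and $\ell_*(A_0B_0+A_0B_1+A_1B_0-A_1B_1)\geq2$. Next I use the $2\times2$ principal minors of the moment matrix on $\{\id,A_0,B_0\}$. From $\ell_*(A_0)=1$ and $\ell_*(A_0^2)=1$, the vector $A_0-\id$ has zero norm in the moment seminorm. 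Cauchy--Schwarz then gives $\ell_*(A_0w)=\ell_*(w)$ for every word $w$ of length less than $d$, and likewise for $B_0$. So $\ell_*(A_0B_y)=\ell_*(B_y)$, $\ell_*(A_xB_0)=\ell_*(A_x)$, and $\ell_*(A_0B_0)=1$.

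Third, I pin down the remaining coordinates, which I expect to be the main obstacle. The CHSH part now reads $1+\ell_*(B_1)+\ell_*(A_1)-\ell_*(A_1B_1)\geq2$. By itself this does not force $\ell_*(A_1)=\ell_*(B_1)=1$; for example $\ell_*(A_1)=\ell_*(B_1)=0$ with $\ell_*(A_1B_1)=-1$ would satisfy it. So the accumulation point need not equal $p_{\rm loc}$ outright. The remedy is to use the quantum endpoint information instead of the crude level-$d$ bounds. Since $\omega_d(\alpha)-\omega_{\rm Q}(\alpha)\to0$ (quotient by $T^3$ aside, NPA converges at each fixed $\alpha$ is not enough), I will instead modify the construction: take a convex combination $p=(1-\lambda)p_{\rm loc}+\lambda p_\alpha$, with $p_{\rm loc}$ quantum and $p_\alpha$ the level-$d$ optimizer. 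Convexity of $\mathcal N_{\mathcal S}$ keeps $p$ feasible. For quantum non-membership, the Bell value of $p$ is
\[
(1-\lambda)(2+2\alpha)+\lambda\omega_d(\alpha)=\omega_{\rm Q}(\alpha)+\lambda\bigl(\omega_d(\alpha)-\omega_{\rm Q}(\alpha)\bigr)-(1-\lambda)\bigl(\omega_{\rm Q}(\alpha)-2-2\alpha\bigr).
\]
By Eq.~\eqref{eq:relative}, for any fixed small $\lambda>0$ this exceeds $\omega_{\rm Q}(\alpha)$ once $\alpha$ is close enough to one, so $p\notin\Q$.

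Finally, the distance from $p$ to $p_{\rm loc}$ is at most $\lambda\cdot\operatorname{diam}$ of the behavior polytope, which is at most $2\lambda$. Choosing $\lambda$ smaller than the neighborhood radius, and then $\alpha$ close enough to one, completes the proof. This convex-mixing step sidesteps the limiting analysis of the second and third steps; the only real input is Eq.~\eqref{eq:relative}, namely that the NPA excess dominates the quantum advantage.
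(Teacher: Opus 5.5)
Your final convex-mixing argument is correct and is essentially the paper's proof. You mix the level-$d$ optimizer with $p_{\rm loc}$, use convexity of $\mathcal N_{\mathcal S}$, and use Eq.~\eqref{eq:relative} to exceed $\omega_{\rm Q}(\alpha)$; the only cosmetic difference is that you fix $\lambda$ from the neighborhood and then send $\alpha\uparrow1$, whereas the paper uses the explicit weight $\lambda_\alpha=2\Delta_\alpha/(\Delta_\alpha+e_{d,\alpha})\to0$. You should delete the abandoned second and third steps, which are unnecessary and contain unsupported claims: $\ell_*(h_1)\geq4$ with $|\ell_*|\leq1$ does not by itself force $\ell_*(A_0)=\ell_*(B_0)=1$, and $\omega_d(\alpha)-\omega_{\rm Q}(\alpha)\to0$ is never established.
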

Let $p_{d,\alpha}$ attain $\omega_d(\alpha)$, set $\Delta_\alpha=\omega_{\rm Q}(\alpha)-(2+2\alpha)$ and $e_{d,\alpha}=\omega_d(\alpha)-\omega_{\rm Q}(\alpha)$, and define
\begin{equation}
q_\alpha=(1-\lambda_\alpha)p_{\rm loc}+\lambda_\alpha p_{d,\alpha},
\qquad
\lambda_\alpha=\frac{2\Delta_\alpha}{\Delta_\alpha+e_{d,\alpha}}.
\label{eq:diluted-behavior}
\end{equation}
Then $q_\alpha$ is feasible, $h_\alpha(q_\alpha)=\omega_{\rm Q}(\alpha)+\Delta_\alpha>\omega_{\rm Q}(\alpha)$, and Eq.~\eqref{eq:relative} gives $\lambda_\alpha\to0$. Thus $q_\alpha\notin\Q$ and $q_\alpha\to p_{\rm loc}$. For the level-$1+AB$ list, this gives an analytical proof that the almost quantum set strictly contains the complete quantum set in the $(2,2,2)$ scenario, with separating behaviors accumulating at a local deterministic point. The almost quantum set is known to satisfy a broad collection of information-theoretic and consistency principles~\cite{NGHA2015}.

The one-sided tilted family provides the sharp comparison. There level $1+AB$ remains exact for every tilt, although the optimal state also approaches a product state; one party's measurements remain maximally incompatible~\cite{BampsPironio2015,GigenaEtAl2025}. Weak entanglement, a small violation, and proximity to the local set do not, by themselves, force the required NPA level to increase. On the symmetric branch both measurement pairs become compatible, and there is no single finite level whose standard sum-of-squares certificate remains exact throughout any interval ending at $\alpha=1$. This is a uniform statement about the family; it does not exclude certificates whose required level increases with $\alpha$.

For unrestricted nonlocal games, $\mathrm{MIP}^*=\mathrm{RE}$ makes approximation of entangled values undecidable~\cite{MIPstarRE}, and a Boolean constraint-system game is known whose commuting-operator value is not attained at any finite NPA level~\cite{FanizzaEtAl2025}. Here the obstruction is different: the Bell scenario is fixed, the exact optimizer acts on two qubits and self-tests, and the hierarchy converges asymptotically. Nevertheless, no fixed finite NPA level equals the complete quantum set. The obstruction is already visible in the most elementary Bell scenario and persists arbitrarily close to a local deterministic behavior.

\medskip
\noindent\textit{Note added.} This manuscript was submitted to Physical Review Letters on 14 July 2026. On 15 July 2026, the closely overlapping preprints \href{https://arxiv.org/abs/2607.13762}{arXiv:2607.13762} and \href{https://arxiv.org/abs/2607.13774}{arXiv:2607.13774} appeared. Taken together, they address the same central open problem for the symmetric doubly tilted CHSH family, and their results involve the same critical endpoint, cubic quantum scaling, Motzkin-polynomial obstruction, and conclusion that no finite NPA level is exact. The present work was completed independently before these postings. Its proof is fully analytical and rests on compactness of fixed-level sum-of-squares certificates. In addition, it establishes failure of exactness for every relaxation based on a fixed finite list of words in the measurement projectors, strict inclusion of the corresponding behavior sets, and accumulation of postquantum feasible behaviors at a local deterministic behavior.

\section*{Acknowledgements}
We acknowledge support from the KLAR Grant No. BNI/PST/2023/1/00013/U/00001, funded by NAWA.

\clearpage
\onecolumngrid
\setcounter{section}{0}
\setcounter{subsection}{0}
\setcounter{equation}{0}
\setcounter{theorem}{0}
\renewcommand{\thesection}{S\arabic{section}}
\renewcommand{\thesubsection}{\thesection.\arabic{subsection}}
\renewcommand{\theequation}{S\arabic{equation}}
\renewcommand{\thetheorem}{S\arabic{theorem}}

\begin{center}
{\large\bfseries Supplemental Material for ``No Finite NPA Level Characterizes the Complete Quantum Set in the Simplest Bell Scenario''\par}
\vspace{0.8em}
{\normalsize Anubhav Chaturvedi\par}
\vspace{0.35em}
{\small Division of Quantum Optics and Information, Institute of Theoretical Physics and Astrophysics,\\
Faculty of Mathematics, Physics and Informatics, University of Gda\'nsk, 80-308 Gda\'nsk, Poland\par}
\end{center}
\vspace{0.6em}

Reference~\cite{GigenaEtAl2025} gives the exact quantum maximum of the symmetric doubly tilted CHSH functional and its maximizing strategy. We start by deriving the endpoint scale directly from the two-qubit block determinant, demonstrating that the maximizing angles are of order $\sqrt T$ and that the quantum advantage is of order $T^3$. We then invoke the exact quartic equation to retrieve the analytic endpoint expansion with a quantified remainder. Next, we formulate the level-$L$ NPA primal and dual programs, prove the trace identity, and identify the Motzkin polynomial in a scaled expectation of a positive operator. The Motzkin limit and fixed-level compactness then yield the divergence theorem, which we extend to arbitrary fixed finite word lists. Finally, we record the ensuing consequences for the complete quantum set.

Throughout, $R(t,x,y)=O_K(t^m)$ means the following quantified statement: for every compact set $K\subset\R^2$, there are constants $C_K>0$ and $\delta_K>0$ such that
\begin{equation}
\sup_{(x,y)\in K}|R(t,x,y)|\leq C_K|t|^m
\qquad (0<|t|<\delta_K).
\label{supp:eq:uniform-big-O-definition}
\end{equation}
When no variables $(x,y)$ are present, the subscript $K$ is omitted. All remainder estimates below have this meaning. We do not use little-$o$ notation.

\section{Bell behaviors, the doubly tilted CHSH functional, and quantum strategies}
\label{supp:sec:bell}

\subsection{Behaviors, correlators, and the local bound}

Consider a bipartite Bell experiment wherein Alice and Bob choose measurements $x,y\in\{0,1\}$ and record outcomes $a,b\in\{\pm1\}$. The ensuing behavior is the family of conditional probabilities $p(a,b|x,y)$. It is valid when
\begin{align}
p(a,b|x,y)&\geq0,\label{supp:eq:valid-positive}\\
\sum_{a,b}p(a,b|x,y)&=1,\label{supp:eq:valid-normalized}
\end{align}
and its marginals do not depend on the distant input,
\begin{align}
\sum_b p(a,b|x,y)&=p_A(a|x),\label{supp:eq:nosig-a}\\
\sum_a p(a,b|x,y)&=p_B(b|y).\label{supp:eq:nosig-b}
\end{align}
For binary outcomes it is convenient to trade probabilities for expectation values. Define the marginals and correlators
\begin{align}
\langle A_x\rangle&=\sum_a a\,p_A(a|x),\label{supp:eq:marginal-a}\\
\langle B_y\rangle&=\sum_b b\,p_B(b|y),\label{supp:eq:marginal-b}\\
\langle A_xB_y\rangle&=\sum_{a,b}ab\,p(a,b|x,y).\label{supp:eq:correlator}
\end{align}
The four functions $1,a,b,ab$ form a basis for real functions on $\{\pm1\}^2$, so this parametrization is invertible: inverting Eqs.~\eqref{supp:eq:marginal-a}--\eqref{supp:eq:correlator} gives
\begin{equation}
p(a,b|x,y)=\frac14\left[1+a\langle A_x\rangle+b\langle B_y\rangle+ab\langle A_xB_y\rangle\right].
\label{supp:eq:behavior-reconstruction}
\end{equation}
A no-signaling behavior in this scenario is therefore exactly a point in $\R^8$: four marginals and four correlators.

The Bell functional studied throughout is
\begin{equation}
h_\alpha=A_0B_0+A_0B_1+A_1B_0-A_1B_1+\alpha(A_0+B_0),
\qquad 0\leq\alpha<1.
\label{supp:eq:bell-functional}
\end{equation}
Its local bound is found by checking deterministic assignments, for which the symbols $A_x,B_y$ take values in $\{\pm1\}$. The CHSH part can be written as
\begin{equation}
A_0(B_0+B_1)+A_1(B_0-B_1).
\label{supp:eq:chsh-deterministic}
\end{equation}
Exactly one of the two brackets $B_0+B_1$ and $B_0-B_1$ vanishes, while the other equals $\pm2$; hence the CHSH contribution is at most $2$. The marginal contribution is at most $2\alpha$. The assignment $A_0=A_1=B_0=B_1=1$ attains both bounds simultaneously, so
\begin{equation}
\omega_{\rm loc}(\alpha)=2+2\alpha.
\label{supp:eq:local-value}
\end{equation}

\subsection{Low-level certificates for CHSH and one-sided tilted CHSH}

We first record the two low-level Bell-value certificates used for comparison. In the observable formulation, level $1$ uses the word list
\begin{equation}
\mathcal W_1=\{\id,A_0,A_1,B_0,B_1\}.
\label{supp:eq:level-one-list}
\end{equation}
Optimizing the CHSH functional over this moment matrix gives Tsirelson's exact quantum maximum $2\sqrt2$~\cite{NPA2007}. The intermediate level conventionally denoted $1+AB$ adjoins all cross-party products,
\begin{equation}
\mathcal W_{1+AB}=\mathcal W_1\cup\{A_xB_y:x,y\in\{0,1\}\}.
\label{supp:eq:level-one-ab-list}
\end{equation}
For the one-sided tilted CHSH operators, introduced in Ref.~\cite{AMP2012},
\begin{equation}
I_\gamma=A_0B_0+A_0B_1+A_1B_0-A_1B_1+\gamma A_0,
\qquad 0\leq\gamma<2,
\label{supp:eq:one-sided-tilted}
\end{equation}
tight analytical sum-of-squares decompositions give the exact quantum maximum and self-test the maximizing strategy~\cite{BampsPironio2015}. These decompositions use words contained in $\mathcal W_{1+AB}$, so the level-$1+AB$ NPA upper bound equals the exact maximum whenever either marginal tilt vanishes~\cite{GigenaEtAl2025}. The theorem concerns the symmetric two-sided branch, where both marginal terms are present and both measurement pairs become asymptotically compatible.

\subsection{Quantum strategies and the reduction to finite dimensions}

A quantum behavior arises from a state and local measurements,
\begin{equation}
p(a,b|x,y)=\Tr\!\left[\rho_{AB}\,E_{a|x}\otimes F_{b|y}\right],
\label{supp:eq:quantum-behavior}
\end{equation}
where $\rho_{AB}$ is a density operator on $\mathcal H_A\otimes\mathcal H_B$ and the local effects satisfy
\begin{equation}
E_{a|x}\succeq0,
\qquad \sum_aE_{a|x}=I_A,
\qquad
F_{b|y}\succeq0,
\qquad \sum_bF_{b|y}=I_B.
\label{supp:eq:povm-relations}
\end{equation}
The set of all such behaviors is denoted by $\Q$, and $\omega_{\rm Q}(\alpha)$ is the supremum of $h_\alpha$ over $\Q$. Although the Hilbert spaces in Eq.~\eqref{supp:eq:quantum-behavior} may be infinite dimensional, every Bell value can be approximated by a finite-dimensional strategy. This follows by truncating the state and compressing the measurements to its finite-dimensional support.

\begin{lemma}[Reduction to finite dimensions]
\label{supp:lem:finite-dim}
For every quantum strategy and every $\varepsilon>0$ there is a strategy on finite-dimensional Hilbert spaces whose Bell value differs by at most $\varepsilon$. Consequently, $\omega_{\rm Q}(\alpha)$ is the supremum of $h_\alpha$ over finite-dimensional strategies.
\end{lemma}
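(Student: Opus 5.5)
The plan is to truncate the state, approximate it in trace norm, and then compress the measurements to a finite-dimensional subspace. Take any strategy $(\rho_{AB},\{E_{a|x}\},\{F_{b|y}\})$ on $\mathcal H_A\otimes\mathcal H_B$. First purify $\rho_{AB}$ if needed. Purification only enlarges the local spaces by ancillas on which the effects act as identity, so it does not change the behavior. We may therefore take a unit vector $\psi$. Its Schmidt decomposition $\psi=\sum_k\sqrt{\lambda_k}\,e_k\otimes f_k$ has $\sum_k\lambda_k=1$. Choose $N$ with $\sum_{k>N}\lambda_k$ small, and let $\psi_N$ be the normalized truncation to the first $N$ Schmidt terms. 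Then $\|\psi-\psi_N\|\le \delta$, with $\delta\to0$ as $N\to\infty$. Consequently the trace distance between the projectors $|\psi\rangle\langle\psi|$ and $|\psi_N\rangle\langle\psi_N|$ is at most $2\delta$.

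Second, let $P_A$ be the projector onto $\Span\{e_1,\dots,e_N\}$ and $P_B$ the projector onto $\Span\{f_1,\dots,f_N\}$. Define compressed effects $E'_{a|x}=P_AE_{a|x}P_A$ and $F'_{b|y}=P_BF_{b|y}P_B$ on these $N$-dimensional spaces. They are positive and sum to $P_A$ and $P_B$, which are the identities on the compressed spaces, so they form valid POVMs. The vector $\psi_N$ lies in $P_A\mathcal H_A\otimes P_B\mathcal H_B$. Hence
\[
\langle\psi_N,(E'_{a|x}\otimes F'_{b|y})\psi_N\rangle=\langle\psi_N,(E_{a|x}\otimes F_{b|y})\psi_N\rangle .
\]
So the compressed finite-dimensional strategy reproduces exactly the behavior of the truncated state with the original measurements.

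Third, compare this with the original behavior. Each operator $E_{a|x}\otimes F_{b|y}$ has norm at most $1$. Therefore each probability changes by at most $\|\,|\psi\rangle\langle\psi|-|\psi_N\rangle\langle\psi_N|\,\|_1\le2\delta$. By Eqs.~\eqref{supp:eq:marginal-a}--\eqref{supp:eq:correlator}, every marginal and correlator changes by at most $8\delta$. The functional $h_\alpha$ in Eq.~\eqref{supp:eq:bell-functional} has coefficients bounded by $1$ on six terms, so its value changes by at most $48\delta$. Choosing $N$ with $48\delta\le\varepsilon$ proves the first claim. The supremum statement then follows immediately: finite-dimensional values never exceed $\omega_{\rm Q}(\alpha)$, and they approximate every attainable value.

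The only delicate point is the first step, reducing a mixed state on possibly infinite-dimensional spaces to a pure state with a countable Schmidt decomposition. Separable Hilbert spaces can be assumed without loss, since the behavior depends only on countably many matrix elements. Purification then works because a trace-class $\rho_{AB}$ has a countable spectral decomposition. Everything else is an elementary norm estimate.
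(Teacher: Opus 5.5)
Your proof is correct and follows essentially the same route as the paper: approximate the state in trace norm by one supported on a finite-dimensional product subspace via Schmidt truncation, compress the effects to that subspace (which reproduces the truncated state's statistics exactly), and bound the change in Bell value by a trace-norm continuity estimate. The differences are cosmetic. You purify the mixed state, placing the ancilla on one party with trivial action, whereas the paper truncates the spectral decomposition and then Schmidt-truncates each retained eigenvector; your constant $48$ is cruder than the paper's $6$; and your separability remark is unnecessary, since trace-class operators and vectors in any tensor product of Hilbert spaces already have countable spectral and Schmidt decompositions.
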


\begin{proof}
The value $h_\alpha(p)$ is a fixed linear combination of the eight expectations in Eqs.~\eqref{supp:eq:marginal-a}--\eqref{supp:eq:correlator}, each of the form $\Tr[\rho_{AB}(G_A\otimes G_B)]$ with $\|G_A\|,\|G_B\|\leq1$ (each $G$ is a difference of effects or the identity). Hence, for any two states,
\begin{equation}
|h_\alpha(\rho)-h_\alpha(\rho')|\leq(4+2\alpha)\,\|\rho-\rho'\|_1\leq6\,\|\rho-\rho'\|_1.
\label{supp:eq:value-continuity}
\end{equation}
Now approximate $\rho_{AB}$ in trace norm by a state supported on a finite-dimensional product subspace. First truncate the spectral decomposition $\rho_{AB}=\sum_kp_k|\psi_k\rangle\langle\psi_k|$ to finitely many terms and renormalize. Then truncate the Schmidt decomposition of each retained $|\psi_k\rangle$ to finitely many terms and renormalize again; a rank-one perturbation bound, $\||\psi\rangle\langle\psi|-|\varphi\rangle\langle\varphi|\|_1\leq2\||\psi\rangle-|\varphi\rangle\|$, controls the error. The resulting state $\rho'$ satisfies $\|\rho_{AB}-\rho'\|_1\leq\varepsilon/6$ and is supported on $S_A\otimes S_B$ for finite-dimensional subspaces $S_A\subseteq\mathcal H_A$ and $S_B\subseteq\mathcal H_B$ spanned by the retained Schmidt vectors.

Let $P_A,P_B$ be the orthogonal projections onto $S_A,S_B$, and compress the effects,
\begin{equation}
E'_{a|x}=P_AE_{a|x}P_A\big|_{S_A},
\qquad
F'_{b|y}=P_BF_{b|y}P_B\big|_{S_B}.
\label{supp:eq:compressed-effects}
\end{equation}
Compression preserves positivity and normalization, so Eq.~\eqref{supp:eq:compressed-effects} defines valid measurements on $S_A$ and $S_B$. Since $\rho'=(P_A\otimes P_B)\rho'(P_A\otimes P_B)$, the compressed strategy reproduces the behavior of $(\rho',E,F)$ exactly. By Eq.~\eqref{supp:eq:value-continuity}, its Bell value is within $\varepsilon$ of the original.
\end{proof}

\subsection{A common projective dilation}

The NPA hierarchy is most conveniently formulated using projective measurements, or equivalently the associated binary observables, which square to the identity. The Naimark dilation turns any binary measurement into a projective one at the cost of an ancilla. We also need the standard observation that, by assigning one ancilla qubit to each setting, both measurements can be dilated on a single enlarged space. Starting from a finite-dimensional strategy (Lemma~\ref{supp:lem:finite-dim}), the dilated space is again finite dimensional.

A binary measurement is determined by its $+1$ effect $0\preceq E\preceq I$. Define, on $\mathcal H\otimes\C^2$,
\begin{equation}
P_E=
\begin{pmatrix}
E&\sqrt{E(I-E)}\\
\sqrt{E(I-E)}&I-E
\end{pmatrix}.
\label{supp:eq:halmos-projector}
\end{equation}
Because $E$ commutes with every function of itself,
\begin{align}
P_E^2
&=
\begin{pmatrix}
E^2+E(I-E)&E\sqrt{E(I-E)}+\sqrt{E(I-E)}(I-E)\\
\sqrt{E(I-E)}E+(I-E)\sqrt{E(I-E)}&E(I-E)+(I-E)^2
\end{pmatrix}
\nonumber\\
&=
\begin{pmatrix}
E&\sqrt{E(I-E)}\\
\sqrt{E(I-E)}&I-E
\end{pmatrix}
=P_E.
\label{supp:eq:halmos-check}
\end{align}
Being also self-adjoint, $P_E$ is a projection. For the isometry $V\psi=\psi\otimes|0\rangle$,
\begin{equation}
V^*P_EV=E,
\label{supp:eq:halmos-compression}
\end{equation}
Thus, measuring $P_E$ on the dilated space, with the ancilla prepared in $|0\rangle$, reproduces the statistics of $E$.

To dilate both settings of a party at once, attach one independent ancilla qubit per setting,
\begin{equation}
\widetilde{\mathcal H}_A=\mathcal H_A\otimes\C^2_0\otimes\C^2_1,
\qquad
\widetilde{\mathcal H}_B=\mathcal H_B\otimes\C^2_0\otimes\C^2_1.
\label{supp:eq:common-dilation-spaces}
\end{equation}
Let $V_A$ and $V_B$ append $|0\rangle$ to every local ancilla. For each $x$, let $P_{+|x}$ be the projection of Eq.~\eqref{supp:eq:halmos-projector} acting on $\mathcal H_A\otimes\C^2_x$ and as the identity on the unused ancilla; set $P_{-|x}=I-P_{+|x}$. Define $Q_{b|y}$ similarly for Bob. Equation~\eqref{supp:eq:halmos-compression} gives
\begin{equation}
V_A^*P_{a|x}V_A=E_{a|x},
\qquad
V_B^*Q_{b|y}V_B=F_{b|y}.
\label{supp:eq:local-compressions}
\end{equation}
Since the dilations are local,
\begin{align}
&(V_A\otimes V_B)^*(P_{a|x}\otimes Q_{b|y})(V_A\otimes V_B)
\nonumber\\
&\hspace{35mm}=(V_A^*P_{a|x}V_A)\otimes(V_B^*Q_{b|y}V_B)
=E_{a|x}\otimes F_{b|y}.
\label{supp:eq:joint-compression}
\end{align}
All joint probabilities are therefore reproduced by a single projective realization on finite-dimensional spaces.

For projective binary measurements, we pass to the binary observables
\begin{equation}
A_x=P_{+|x}-P_{-|x},
\qquad
B_y=Q_{+|y}-Q_{-|y},
\label{supp:eq:reflections}
\end{equation}
which satisfy
\begin{equation}
A_x=A_x^*,\quad A_x^2=I,
\qquad
B_y=B_y^*,\quad B_y^2=I.
\label{supp:eq:reflection-relations}
\end{equation}
The projections are recovered from
\begin{equation}
P_{a|x}=\frac{I+aA_x}{2},
\qquad
Q_{b|y}=\frac{I+bB_y}{2}.
\label{supp:eq:projector-reflection}
\end{equation}

\section{Jordan's lemma and the reduction to two-qubit blocks}
\label{supp:sec:jordan}

Jordan's lemma reduces two binary observables to common invariant blocks of dimension at most two. The following proof is the finite-dimensional form of Halmos's two-subspace decomposition~\cite{Halmos1969}.

\begin{lemma}[Jordan's lemma]
\label{supp:lem:jordan}
Let $A_0,A_1$ be binary observables ($A_x=A_x^*$, $A_x^2=I$) on a finite-dimensional Hilbert space $\mathcal H$. Then $\mathcal H$ is an orthogonal direct sum of subspaces, each invariant under both $A_0$ and $A_1$, of dimension at most two. On each two-dimensional subspace there is an orthonormal basis in which
\begin{equation}
A_0=Z,
\qquad
A_1=cZ+rX,
\qquad
r=\sqrt{1-c^2}>0,
\qquad c\in(-1,1),
\label{supp:eq:alice-block}
\end{equation}
and on each one-dimensional subspace $A_0=\pm1$ and $A_1=\pm1$.
\end{lemma}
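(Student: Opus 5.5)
Writing $P_x=(I+A_x)/2$ for the two projections, the plan is to build invariant subspaces from the operator $S=A_0A_1+A_1A_0$. Expanding the squares shows $(A_0+A_1)^2=2I+S$, so $S$ is self-adjoint. Also $A_0S=A_1+A_0A_1A_0=SA_0$, and the same computation gives $A_1S=SA_1$. Hence $S$ commutes with both observables, and each eigenspace $\mathcal H_\mu$ of $S$ is invariant under $A_0$ and $A_1$. Because $S=(A_0+A_1)^2-2I\succeq-2I$ and $\|S\|\le2$, the eigenvalues satisfy $\mu\in[-2,2]$. The lemma then reduces to decomposing each eigenspace.

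For an eigenspace with $\mu\in(-2,2)$, I pick a unit eigenvector $v$ of $A_0$ in $\mathcal H_\mu$, say $A_0v=v$, and set $W=\Span\{v,A_1v\}$. The relation $A_0A_1v=\mu v-A_1A_0v=\mu v-A_1v$ shows that $W$ is invariant under $A_0$. It is invariant under $A_1$ because $A_1^2=I$. Next I show $\dim W=2$. Suppose $A_1v=\lambda v$. Then $A_0A_1v=\lambda v$ and $A_1A_0v=\lambda v$, so $\mu=2\lambda=\pm2$, which is excluded.

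Since $A_0,A_1$ are self-adjoint, $W^\perp\cap\mathcal H_\mu$ is also invariant, so induction on dimension splits $\mathcal H_\mu$ into such two-dimensional blocks. Inside $W$ I take the orthonormal basis $\{v,v'\}$, where $v'$ is the normalized component of $A_1v$ orthogonal to $v$. In this basis $A_0=\mathrm{diag}(1,-1)$, because $A_0$ has trace zero on $W$: its eigenvalues are $1$ on $v$, and $\mathrm{tr}(A_0A_1)=\mu/2$ together with the relation above rules out $A_0=I$ on $W$. The observable $A_1$ is a traceless reflection whose diagonal entry is $c=\langle v,A_1v\rangle=\mu/2\in(-1,1)$. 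Its off-diagonal entry can be made the positive real number $r=\sqrt{1-c^2}$ by a phase choice of $v'$. This gives Eq.~\eqref{supp:eq:alice-block}. The same construction works if the eigenvector $v$ of $A_0$ has eigenvalue $-1$, by swapping the roles of the signs.

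For $\mu=\pm2$ I would use $(A_0\mp A_1)^2=2I\mp S=0$ on $\mathcal H_{\pm2}$. Since $A_0\mp A_1$ is self-adjoint, this forces $A_1=\pm A_0$ there. Diagonalizing $A_0$ on $\mathcal H_{\pm2}$ then yields one-dimensional common eigenspaces with $A_0,A_1\in\{\pm1\}$.

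The main obstacle is getting the trace/sign bookkeeping of the two-dimensional block right, namely that $A_0$ restricted to $W$ really is $Z$ rather than $\pm I$. I would settle this directly. If $A_0=I$ on $W$, then $A_0A_1v=A_1v$, and combined with $A_0A_1v=\mu v-A_1v$ this gives $A_1v=(\mu/2)v$. That contradicts $\dim W=2$.
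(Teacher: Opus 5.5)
Your proof is correct and follows the paper's strategy. Like the paper, you use the commuting operator $A_0A_1+A_1A_0$, reduce to its eigenspaces, get $A_1=\pm A_0$ on the extreme eigenspaces, and split off two-dimensional invariant blocks by induction on orthogonal complements. The one difference is that you generate each block from an eigenvector of $A_0$ rather than of the unitary $A_0A_1$. Gram--Schmidt then gives $c=\mu/2$ and $r>0$ directly, with no phase absorption. Your side remark should read $\Tr(A_0A_1|_W)=\mu$ rather than $\mu/2$, but your direct argument that $A_0\neq I$ on $W$ makes that remark unnecessary.
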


\begin{proof}
Consider the self-adjoint operator $K=A_0A_1+A_1A_0$. Using $A_x^2=I$,
\begin{equation}
A_0KA_0=A_1A_0+A_0A_1=K,
\qquad
A_1KA_1=A_0A_1+A_1A_0=K,
\label{supp:eq:K-commutes}
\end{equation}
so $K$ commutes with both observables. Every eigenspace of $K$ is therefore invariant under $A_0$ and $A_1$, and it suffices to prove the lemma on a single eigenspace, where $K=2\kappa I$ for some real $\kappa$. On it, the unitary $U=A_0A_1$ satisfies
\begin{equation}
U+U^*=K=2\kappa I,
\label{supp:eq:U-relation}
\end{equation}
because $A_1A_0=(A_0A_1)^*$. Since $U$ is unitary, its eigenvalues lie on the unit circle, and Eq.~\eqref{supp:eq:U-relation} confines them to $\{e^{i\theta},e^{-i\theta}\}$ with $\cos\theta=\kappa$, $\theta\in[0,\pi]$.

If $\kappa=\pm1$, then $U=\pm I$, so $A_1=A_0U=\pm A_0$. Diagonalizing $A_0$ splits the eigenspace into one-dimensional invariant subspaces on which $A_0=\pm1$ and $A_1=\pm1$.

If $|\kappa|<1$, pick a unit eigenvector $w$ with $Uw=e^{i\theta}w$ (replacing $\theta$ by $-\theta$ if necessary), and set $V=\Span\{w,A_0w\}$. The vector $A_0w$ is not proportional to $w$: otherwise $A_0w=\lambda w$ with $\lambda=\pm1$, and then $A_1w=A_0Uw=\lambda e^{i\theta}w$ would give the self-adjoint operator $A_1$ a nonreal eigenvalue. So $\dim V=2$. The subspace is invariant:
\begin{equation}
A_0(A_0w)=w,\qquad
A_1w=A_0Uw=e^{i\theta}A_0w,\qquad
A_1(A_0w)=U^*w=e^{-i\theta}w.
\label{supp:eq:V-invariance}
\end{equation}
On $V$, the operator $A_0$ exchanges $w$ and $A_0w$, so it has eigenvalues $+1$ and $-1$; choose an orthonormal eigenbasis to write $A_0=Z$. The restriction $A_1|_V$ is again a binary observable, and it is not $\pm Z$ or $\pm I$: any of these would make $U|_V=A_0A_1|_V$ have eigenvalues $\pm1$, contradicting $e^{\pm i\theta}\notin\R$. Hence $A_1|_V=cZ+r(\cos\varphi\,X+\sin\varphi\,Y)$ with $c\in(-1,1)$ and $r=\sqrt{1-c^2}>0$; absorbing the phase $\varphi$ into the basis vector of the $-1$ eigenspace of $Z$ leaves $A_0=Z$ untouched and turns $A_1$ into $cZ+rX$. Finally, the orthogonal complement of $V$ within the eigenspace of $K$ is invariant (the observables are self-adjoint and preserve $V$), so the construction repeats there. Induction on the dimension completes the proof.
\end{proof}

Applying Lemma~\ref{supp:lem:jordan} to Alice's pair and to Bob's pair, with Bob's parameters $d$ and $s$,
\begin{equation}
B_0=Z,
\qquad
B_1=dZ+sX,
\qquad
s=\sqrt{1-d^2},
\qquad d\in[-1,1],
\label{supp:eq:bob-block}
\end{equation}
we obtain decompositions $\mathcal H_A=\bigoplus_iV_i$ and $\mathcal H_B=\bigoplus_jW_j$ into blocks of dimension at most two. The Bell operator of $h_\alpha$ is a sum of products of Alice and Bob observables, so it preserves every subspace $V_i\otimes W_j$, and its largest eigenvalue is the largest eigenvalue among the blocks. A block with $\dim V_i=\dim W_j=2$ is exactly the $4\times4$ matrix $H_\alpha(c,d)$ built from Eqs.~\eqref{supp:eq:alice-block} and \eqref{supp:eq:bob-block}. A block in which Alice's factor is one dimensional, $A_0=\epsilon_0$ and $A_1=\epsilon_1$ with $\epsilon_x=\pm1$, coincides with the restriction of $H_\alpha(\epsilon_0\epsilon_1,d)$ to the eigenspace of $Z\otimes I$ with eigenvalue $\epsilon_0$, which is an invariant subspace of that matrix; its eigenvalues are therefore among those of $H_\alpha(\epsilon_0\epsilon_1,d)$. The same applies to Bob and to doubly one-dimensional blocks. Hence
\begin{equation}
\omega_{\rm Q}(\alpha)
\leq\max_{c,d\in[-1,1]}\lambda_{\max}H_\alpha(c,d).
\label{supp:eq:jordan-upper}
\end{equation}
Conversely, every pair $(c,d)$ defines an honest two-qubit strategy: take a top eigenvector of $H_\alpha(c,d)$ as the state. Hence each $\lambda_{\max}H_\alpha(c,d)$ is a quantum value. Together with Lemma~\ref{supp:lem:finite-dim},
\begin{equation}
\omega_{\rm Q}(\alpha)
=\max_{c,d\in[-1,1]}\lambda_{\max}H_\alpha(c,d).
\label{supp:eq:jordan-equality}
\end{equation}
The maximum exists because the parameter square is compact and the largest eigenvalue depends continuously on the matrix entries. In particular, the supremum defining $\omega_{\rm Q}(\alpha)$ is attained on two qubits.

For the endpoint analysis, we set
\begin{equation}
\alpha=1-T,
\qquad 0<T\leq1.
\label{supp:eq:T-definition}
\end{equation}
Substituting Eqs.~\eqref{supp:eq:alice-block} and \eqref{supp:eq:bob-block} into Eq.~\eqref{supp:eq:bell-functional} gives
\begin{align}
H_T(c,d)
={}&(1+c+d-cd)Z\otimes Z
+(1-c)s\,Z\otimes X
\nonumber\\
&+(1-d)r\,X\otimes Z-rs\,X\otimes X
+(1-T)(Z\otimes I+I\otimes Z).
\label{supp:eq:operator-decomposition}
\end{align}
Every coefficient follows directly from expanding
\begin{equation}
Z\otimes Z+Z\otimes(dZ+sX)+(cZ+rX)\otimes Z-(cZ+rX)\otimes(dZ+sX).
\label{supp:eq:chsh-block-expansion}
\end{equation}
In the computational basis $|00\rangle,|01\rangle,|10\rangle,|11\rangle$,
\begin{equation}
H_T(c,d)=
\begin{pmatrix}
\Omega+2(1-T)&(1-c)s&(1-d)r&-rs\\
(1-c)s&-\Omega&-rs&(d-1)r\\
(1-d)r&-rs&-\Omega&(c-1)s\\
-rs&(d-1)r&(c-1)s&\Omega-2(1-T)
\end{pmatrix},
\label{supp:eq:block-matrix}
\end{equation}
where
\begin{equation}
\Omega=1+c+d-cd.
\label{supp:eq:omega-block}
\end{equation}
Combining Eqs.~\eqref{supp:eq:jordan-equality} and \eqref{supp:eq:block-matrix},
\begin{equation}
\omega_{\rm Q}(1-T)=\max_{c,d\in[-1,1]}\lambda_{\max}H_T(c,d).
\label{supp:eq:block-maximum}
\end{equation}

\section{Why the endpoint scale is cubic}
\label{supp:sec:cubic-scale}

The exact quartic in the next section gives the full analytic expansion. The present derivation has a different purpose: it shows directly from the two-qubit block why the maximizing angles collapse on the scale $\sqrt T$, why the quantum advantage first appears at order $T^3$, and why its coefficient is $4/3$.

Set
\begin{equation}
\lambda_0(T)=4-2T,
\label{supp:eq:lambda-zero}
\end{equation}
which is the local maximum at $\alpha=1-T$. Write
\begin{equation}
u=1-c,
\qquad v=1-d,
\qquad \lambda=\lambda_0(T)+\sigma.
\label{supp:eq:scaled-raw-variables}
\end{equation}
Thus $u,v\in[0,2]$, and $r^2=1-c^2=u(2-u)$ and $s^2=1-d^2=v(2-v)$. Define
\begin{equation}
q_T(\sigma;u,v)=\det\!\left[(\lambda_0(T)+\sigma)I-H_T(1-u,1-v)\right].
\label{supp:eq:q-definition}
\end{equation}
Direct evaluation of the determinant gives the exact polynomial identity
\begin{equation}
q_T(\sigma;u,v)=\sigma P_T(\sigma;u,v)+8uvR_T(u,v),
\label{supp:eq:q-factor-form}
\end{equation}
where
\begin{align}
P_T(\sigma;u,v)
={}&\sigma^3+(16-8T)\sigma^2+(84-88T+20T^2)\sigma
\nonumber\\
&+144-240T+112T^2-16T^3+8uv(1-T)^2,
\label{supp:eq:P-polynomial}
\end{align}
and
\begin{equation}
R_T(u,v)=(3+2T-T^2)(u+v)-2uv-18T+12T^2-2T^3.
\label{supp:eq:R-polynomial}
\end{equation}
The factor $uv$ in the constant term reflects the endpoint structure. If $u=0$ or $v=0$, one local measurement pair coincides with the endpoint direction, and $\lambda_0(T)$ remains an eigenvalue. Hence $q_T(0;u,v)$ must vanish on both coordinate axes.

\subsection{An explicit strategy fixes the lower bound}

Put $T=t^2$ and take equal local angles,
\begin{equation}
c=d=\cos(\kappa t).
\label{supp:eq:trial-angles}
\end{equation}
In the block matrix, the coupling from $|00\rangle$ to $|11\rangle$ is of order $t^2$, whereas the couplings from $|00\rangle$ to $|01\rangle$ and $|10\rangle$ are of order $t^3$. To affect the Rayleigh quotient through order $t^6$, it is therefore sufficient to use
\begin{equation}
\eta_t(\kappa,a,c_0)=
\begin{pmatrix}
1&a t^3&a t^3&c_0t^2
\end{pmatrix}^{T}.
\label{supp:eq:general-trial-vector}
\end{equation}
Expanding the quotient, with the remainder convention of Eq.~\eqref{supp:eq:uniform-big-O-definition}, gives
\begin{align}
\frac{\langle\eta_t,H_{t^2}\eta_t\rangle}{\|\eta_t\|^2}
={}&4-2t^2
-4\left(c_0+\frac{\kappa^2}{4}\right)^2t^4
\nonumber\\
&+\left[-12a^2+2a\kappa^3+4c_0^2
+\frac{2}{3}c_0\kappa^4+\frac{\kappa^6}{24}\right]t^6
+O(t^8).
\label{supp:eq:general-Rayleigh}
\end{align}
The order-$t^4$ term cannot be positive. A positive improvement over the local value at order $t^6$ therefore requires
\begin{equation}
c_0=-\frac{\kappa^2}{4}.
\label{supp:eq:c0-choice}
\end{equation}
For this choice, the order-$t^6$ coefficient becomes
$-12(a-\kappa^3/12)^2+\kappa^4(6-\kappa^2)/24$ and is largest when $a=\kappa^3/12$. Writing $z=\kappa^2$, the remaining coefficient is
\begin{equation}
f(z)=\frac{z^2(6-z)}{24},
\qquad
f'(z)=\frac{z(4-z)}8.
\label{supp:eq:f-z}
\end{equation}
Its maximum for $z\geq0$ is $f(4)=4/3$. Thus the optimal choices within this expansion are
\begin{equation}
c=d=\cos(2t),
\qquad
\eta_t=
\begin{pmatrix}
1&\frac23t^3&\frac23t^3&-t^2
\end{pmatrix}^{T}.
\label{supp:eq:optimal-trial}
\end{equation}
For this vector,
\begin{align}
\langle\eta_t,H_{t^2}\eta_t\rangle
&=4-2t^2+4t^4+\frac{26}{9}t^6-\frac83t^8+O(t^{10}),
\label{supp:eq:trial-numerator}\\
\|\eta_t\|^2
&=1+t^4+\frac89t^6,
\label{supp:eq:trial-norm}
\end{align}
and division gives
\begin{equation}
\frac{\langle\eta_t,H_{t^2}\eta_t\rangle}{\|\eta_t\|^2}
=4-2T+\frac43T^3-\frac89T^4+O(T^5).
\label{supp:eq:trial-Rayleigh-final}
\end{equation}
This is only a lower bound on the exact quantum value; its fourth-order coefficient is not used below. Its role is to prove that the excess above the local value is positive and at least $(4/3)T^3$ to leading order.

\subsection{The block determinant forces the matching upper bound}

For each $T>0$, choose a maximizing pair $(c_T,d_T)$ in Eq.~\eqref{supp:eq:block-maximum}, and set
\begin{equation}
u_T=1-c_T,
\qquad v_T=1-d_T,
\qquad
\omega_{\rm Q}(1-T)=\lambda_0(T)+\sigma_T.
\label{supp:eq:maximizer-variables}
\end{equation}
Equation~\eqref{supp:eq:trial-Rayleigh-final} gives $\sigma_T>0$ for all sufficiently small $T$.

For $0<T\leq1/4$, every coefficient of $P_T(\sigma;u,v)$ is positive when $\sigma\geq0$ and $u,v\in[0,2]$. In particular,
\begin{align}
16-8T&\geq14,\nonumber\\
84-88T+20T^2&\geq62,\nonumber\\
144-240T+112T^2-16T^3+8uv(1-T)^2&\geq83.
\label{supp:eq:P-lower-bounds}
\end{align}
At the maximizing eigenvalue, Eq.~\eqref{supp:eq:q-factor-form} reads
\begin{equation}
0=\sigma_TP_T(\sigma_T;u_T,v_T)+8u_Tv_TR_T(u_T,v_T).
\label{supp:eq:q-at-max}
\end{equation}
The first term is strictly positive. Hence $u_Tv_T>0$ and $R_T(u_T,v_T)<0$. Since $uv\leq u+v$ on $[0,2]^2$, Eq.~\eqref{supp:eq:R-polynomial} gives
\begin{equation}
R_T(u,v)\geq(1+2T-T^2)(u+v)-18T\geq u+v-18T.
\label{supp:eq:R-lower}
\end{equation}
Therefore
\begin{equation}
u_T+v_T<18T.
\label{supp:eq:u-v-localization}
\end{equation}
This is the required localization of the maximizing angles.

It follows that
\begin{equation}
u_Tv_T\leq\frac{(u_T+v_T)^2}{4}<81T^2.
\label{supp:eq:uv-bound}
\end{equation}
Moreover,
\begin{equation}
-R_T(u_T,v_T)\leq18T+2u_Tv_T\leq59T
\label{supp:eq:R-upper}
\end{equation}
for $0<T\leq1/4$. Using $P_T\geq83$ in Eq.~\eqref{supp:eq:q-at-max},
\begin{equation}
83\sigma_T
\leq-8u_Tv_TR_T(u_T,v_T)
\leq8\cdot81\cdot59\,T^3.
\label{supp:eq:sigma-bound}
\end{equation}
Thus the eigenvalue excess is bounded by a constant times $T^3$.

Define the bounded scaled variables
\begin{equation}
U_T=\frac{u_T}{T},
\qquad
V_T=\frac{v_T}{T},
\qquad
S_T=\frac{\sigma_T}{T^3}.
\label{supp:eq:scaled-variables}
\end{equation}
Take any sequence $T_n\downarrow0$ and a subsequence on which $(U_{T_n},V_{T_n},S_{T_n})\to(U,V,S)$. Substitute $u=TU$, $v=TV$, and $\sigma=T^3S$ into the exact identity~\eqref{supp:eq:q-factor-form}, divide by $T^3$, and take the limit. The resulting equation is
\begin{equation}
144S+24U^2V+24UV^2-144UV=0,
\label{supp:eq:leading-scaled-equation}
\end{equation}
so
\begin{equation}
S=UV\left(1-\frac{U+V}{6}\right).
\label{supp:eq:S-relation}
\end{equation}
Let $w=U+V$. If $w\geq6$, the right side is nonpositive. If $0\leq w\leq6$, then $UV\leq w^2/4$, and therefore
\begin{equation}
S\leq\frac{w^2(6-w)}{24}\leq\frac43.
\label{supp:eq:S-w-bound}
\end{equation}
The lower bound in Eq.~\eqref{supp:eq:trial-Rayleigh-final} matches this upper bound. Hence
\begin{equation}
\lim_{T\downarrow0}
\frac{\omega_{\rm Q}(1-T)-(4-2T)}{T^3}=\frac43.
\label{supp:eq:cubic-limit-independent}
\end{equation}
Every subsequential limit attaining $4/3$ must saturate both $UV\leq(U+V)^2/4$ and the final one-variable maximization. Consequently $U=V=2$, and
\begin{equation}
\frac{1-c_T}{T}\longrightarrow2,
\qquad
\frac{1-d_T}{T}\longrightarrow2.
\label{supp:eq:angle-scaling-independent}
\end{equation}
Since $(1-\cos\theta)/\theta^2\to1/2$, both optimal local angles satisfy $\theta_A/(2\sqrt T)\to1$ and $\theta_B/(2\sqrt T)\to1$.

\section{Exact endpoint expansion from the analytic quantum value}
\label{supp:sec:endpoint}

The preceding section derives the cubic scale and coefficient directly from the block determinant. We now use the independent exact solution of the global two-qubit optimization from Ref.~\cite{GigenaEtAl2025} to obtain the complete local expansion and a controlled remainder. For the symmetric functional of Eq.~\eqref{supp:eq:bell-functional}, its quantum maximum is the largest real root of
\begin{align}
F(\lambda,\alpha)={}&\lambda^4+(4-\alpha^2)\lambda^3
+\left(\frac{11}{4}\alpha^4-12\alpha^2-4\right)\lambda^2
\nonumber\\
&+\left(2\alpha^6-\alpha^4-20\alpha^2-32\right)\lambda
\nonumber\\
&+5\alpha^6-21\alpha^4+16\alpha^2-32.
\label{supp:eq:exact-quartic}
\end{align}
The same result gives the common cosine of the two optimal measurement angles,
\begin{equation}
c_*(\alpha)=\frac18\left[3\alpha^2-4+
\sqrt{16+9\alpha^4+8\alpha^2\bigl(2\omega_{\rm Q}(\alpha)-1\bigr)}\right].
\label{supp:eq:exact-optimal-cosine}
\end{equation}
These exact formulas remove any need to infer the endpoint coefficient from a numerical fit or from an uncontrolled truncation.

Set
\begin{equation}
T=1-\alpha.
\label{supp:eq:endpoint-T}
\end{equation}
At $T=0$, Eq.~\eqref{supp:eq:exact-quartic} factorizes as
\begin{equation}
F(\lambda,1)=\frac14(\lambda-4)
\left(4\lambda^3+28\lambda^2+59\lambda+32\right),
\label{supp:eq:quartic-factor-endpoint}
\end{equation}
and
\begin{equation}
\partial_\lambda F(4,1)=243\neq0.
\label{supp:eq:simple-root}
\end{equation}
The analytic implicit function theorem therefore gives a unique real-analytic root $\lambda(T)$ in a neighborhood of $T=0$ with $\lambda(0)=4$. Since the exact quantum value is the root that tends to $4$ as $\alpha\uparrow1$, one has
\begin{equation}
\omega_{\rm Q}(1-T)=\lambda(T)
\label{supp:eq:quantum-analytic-branch}
\end{equation}
for all sufficiently small $T\geq0$.

Write
\begin{equation}
\lambda(T)=4+c_1T+c_2T^2+c_3T^3+c_4T^4+c_5T^5+R_6(T),
\label{supp:eq:lambda-series-ansatz}
\end{equation}
where analyticity implies that there are constants $C>0$ and $\delta>0$ for which
\begin{equation}
|R_6(T)|\leq CT^6
\qquad (0\leq T<\delta).
\label{supp:eq:R6-bound}
\end{equation}
Substitution of Eq.~\eqref{supp:eq:lambda-series-ansatz} and $\alpha=1-T$ into the exact identity $F(\lambda(T),1-T)=0$ gives, successively,
\begin{align}
243(c_1+2)&=0,&
243c_2&=0,
\nonumber\\
81(3c_3-4)&=0,&
27(9c_4+4)&=0,
\nonumber\\
3(81c_5-4)&=0.
\label{supp:eq:coefficient-equations}
\end{align}
Thus
\begin{equation}
\boxed{
\omega_{\rm Q}(1-T)=4-2T+\frac43T^3-\frac49T^4+\frac4{81}T^5+R_6(T)
}
\label{supp:eq:exact-endpoint-expansion}
\end{equation}
with the explicit remainder bound in Eq.~\eqref{supp:eq:R6-bound}. In particular,
\begin{align}
\lim_{T\downarrow0}
\frac{\omega_{\rm Q}(1-T)-(4-2T)}{T^3}
&=\frac43,
\label{supp:eq:cubic-limit}\\
\lim_{T\downarrow0}
\frac{\omega_{\rm Q}(1-T)-\left(4-2T+\frac43T^3\right)}{T^4}
&=-\frac49.
\label{supp:eq:quartic-limit}
\end{align}
The local value is $4-2T$, so Eq.~\eqref{supp:eq:cubic-limit} is the exact cubic coefficient of the quantum advantage.

Substituting Eq.~\eqref{supp:eq:exact-endpoint-expansion} into Eq.~\eqref{supp:eq:exact-optimal-cosine} gives
\begin{equation}
c_*(1-T)=1-2T+\frac89T^2+\frac{20}{81}T^3+R_c(T),
\label{supp:eq:optimal-cosine-expansion}
\end{equation}
where $|R_c(T)|\leq C_cT^4$ for sufficiently small $T$. Let $\theta_*(T)=\arccos c_*(1-T)\in[0,\pi]$. From Eq.~\eqref{supp:eq:optimal-cosine-expansion} and
\begin{equation}
\lim_{\theta\to0}\frac{1-\cos\theta}{\theta^2}=\frac12,
\label{supp:eq:cosine-limit}
\end{equation}
one obtains the precise angle scaling
\begin{equation}
\lim_{T\downarrow0}\frac{\theta_*(T)}{2\sqrt T}=1.
\label{supp:eq:measurement-angle-limit}
\end{equation}
Therefore both measurement pairs become compatible at the endpoint, while the quantum advantage above the local value vanishes with the exact cubic coefficient $4/3$.

\section{The NPA hierarchy at level \texorpdfstring{$L$}{L}: moment matrices, duality, and the trace identity}
\label{supp:sec:npa}

\subsection{Words and moment matrices}

We use the algebraic formulation of the NPA hierarchy~\cite{NPA2007,NPA2008}. The measurement operators of the scenario obey exactly three kinds of relations: each binary observable squares to the identity, Alice's observables commute with Bob's, and nothing else. These relations define the group
\begin{equation}
G=\langle a_0,a_1,b_0,b_1:
a_x^2=b_y^2=e,\ a_xb_y=b_ya_x\rangle.
\label{supp:eq:group-presentation}
\end{equation}
A \emph{reduced word} is the normal form of a group element: cancel adjacent equal generators of the same party and move every Alice generator to the left of every Bob generator. Within each party, a reduced word alternates between the two generators, for example $a_0a_1a_0\,b_1b_0$. Distinct reduced words represent distinct elements of $G$. The \emph{length} of a word is its number of generators, and $\mathcal W_L$ denotes the (finite) set of reduced words of length at most $L$. We write
\begin{equation}
V_L=\Span_{\C}\mathcal W_L.
\label{supp:eq:VL}
\end{equation}

At level $L$, candidate expectation values are represented by a linear functional. A Hermitian linear functional $\ell$ on $\Span\mathcal W_{2L}$, satisfying $\ell(g^{-1})=\overline{\ell(g)}$, specifies a list of candidate expectation values, one for each word. Its \emph{moment matrix} at level $L$ is
\begin{equation}
\Gamma_L(\ell)=\bigl(\ell(u^*v)\bigr)_{u,v\in\mathcal W_L},
\label{supp:eq:moment-matrix}
\end{equation}
where $u^*$ is the reversed word (equivalently, the group inverse). If the candidate expectations really came from a state $|\psi\rangle$ and a representation $\pi$ of the observables, then $\Gamma_L(\ell)$ would be the Gram matrix of the vectors $\pi(w)|\psi\rangle$ for $w\in\mathcal W_L$, and Gram matrices are positive semidefinite. Positivity of the moment matrix is precisely what the hierarchy retains of the existence of a state. The NPA bound at level $L$ is accordingly
\begin{equation}
\omega_L(\alpha)=
\sup\left\{\ell(h_\alpha):\ell(e)=1,\ \Gamma_L(\ell)\succeq0\right\}.
\label{supp:eq:primal-NPA}
\end{equation}
Every quantum strategy gives such a functional through
\begin{equation}
\ell(w)=\langle\psi|\pi(w)|\psi\rangle,
\label{supp:eq:quantum-functional}
\end{equation}
so the bound is valid at every level:
\begin{equation}
\omega_L(\alpha)\geq\omega_{\rm Q}(\alpha).
\label{supp:eq:outer-bound}
\end{equation}

The dual objects of moment matrices are sums of squares. Define the level-$L$ sum-of-squares cone
\begin{equation}
\Sigma_L=
\left\{\sum_{j=1}^{r}p_j^*p_j:
p_j\in V_L,\ r<\infty\right\}.
\label{supp:eq:sigma-L}
\end{equation}
If $\bm w$ is the column vector of the words in $\mathcal W_L$, every element of $\Sigma_L$ has a Gram representation
\begin{equation}
\sum_jp_j^*p_j=\bm w^*Q\bm w,
\qquad Q\succeq0.
\label{supp:eq:gram-representation}
\end{equation}
If $\lambda e-h_\alpha\in\Sigma_L$, then every feasible $\ell$ satisfies $\lambda-\ell(h_\alpha)=\sum_j\ell(p_j^*p_j)\geq0$, since $\Gamma_L(\ell)\succeq0$. Thus sums of squares certify upper bounds. The next proposition proves equality between the optimal moment bound and the optimal certificate, with attainment on both sides.

\subsection{Exact duality and attainment}

\begin{proposition}
\label{supp:prop:duality}
For every $L\geq1$ and $0\leq\alpha\leq1$,
\begin{equation}
\omega_L(\alpha)=
\min\left\{\lambda\in\R:\lambda e-h_\alpha\in\Sigma_L\right\}.
\label{supp:eq:dual-NPA}
\end{equation}
The primal maximum and the dual minimum are attained, and the Gram matrix in Eq.~\eqref{supp:eq:gram-representation} may be chosen real and symmetric.
\end{proposition}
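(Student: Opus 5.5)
Viewed as a pair of finite-dimensional conic programs, this is a primal SDP over the real vector space of Hermitian functionals on $\Span\mathcal W_{2L}$ and a dual SDP over Gram matrices. The plan is to prove weak duality, strict feasibility of the primal, and compactness of the primal feasible set. Standard conic duality (Slater) then gives zero gap and attainment of the dual, and compactness gives attainment of the primal.

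\textbf{Setup and weak duality.} The constraint $\lambda e-h_\alpha=\bm w^*Q\bm w$ is a finite system of linear equations in $(\lambda,Q)$, obtained by matching coefficients of reduced words in $\mathcal W_{2L}$. Weak duality was already noted before the statement: any certificate gives $\lambda\geq\ell(h_\alpha)$ for every feasible $\ell$.

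\textbf{Realness.} If $(\lambda,Q)$ is a certificate, then so is $(\lambda,\bar Q)$. To see this, note that the group relations and $h_\alpha$ are invariant under the involution reversing words, and $h_\alpha$ has real coefficients; so taking complex conjugates of the coefficient identity and comparing word by word (using that $u^*v$ and $v^*u$ are mutually inverse) shows $\bm w^*\bar Q\bm w=\lambda e-h_\alpha$ as well. Averaging gives the real symmetric certificate $(Q+\bar Q)/2\succeq0$. Equivalently, one can restrict the primal to real functionals with $\ell(g^{-1})=\ell(g)$: symmetrizing a feasible $\ell$ by $\ell\mapsto(\ell+\bar\ell)/2$ keeps it feasible and preserves $\ell(h_\alpha)$, because $\bar\ell$ is the functional of the transposed moment matrix.

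\textbf{Strict feasibility.} This is the step I expect to be the main obstacle: exhibiting a primal point with $\Gamma_L(\ell)\succ0$. My first attempt is the tracial functional $\tau(g)=\delta_{g,e}$. Then $\tau(u^*v)=\delta_{u,v}$, because distinct reduced words are distinct group elements, so $\Gamma_L(\tau)=I\succ0$. This is exactly the orthonormality later used for the trace identity, and it gives a Slater point. To justify $\Gamma_L(\tau)\succeq0$ honestly, and to have a genuine state, I would realize $\tau$ as the vector state $\delta_e$ in the left regular representation of $G$ on $\ell^2(G)$. Linear independence of distinct group elements in that representation gives strict positivity.

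\textbf{Boundedness and attainment.} For every feasible $\ell$, the $2\times2$ principal minors with rows $e$ and $w$ give $|\ell(w)|\leq1$ for $w\in\mathcal W_L$. Bounding $\ell$ on longer words up to $\mathcal W_{2L}$ needs care, since only entries $u^*v$ appear. However, every word in $\mathcal W_{2L}$ is of the form $u^*v$ with $u,v\in\mathcal W_L$, and the minors on rows $u,v$ give $|\ell(u^*v)|^2\leq\ell(u^*u)\ell(v^*v)=1$. Hence the feasible set is compact, and the primal supremum is attained.

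\textbf{Conclusion.} Slater's condition for the primal yields zero duality gap and attainment of the dual minimum. The dual feasible set is nonempty, since a large multiple of $e$ minus $h_\alpha$ is a sum of squares of elements of $V_1$, using $(e\pm A_x)^2$-type and $(A_x\pm B_y)^2$ squares.
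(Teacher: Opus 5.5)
Your proposal is correct and matches the paper's proof essentially step for step: the canonical trace $\tau$ with $\Gamma_L(\tau)=I$ as Slater point, the moment Cauchy--Schwarz bound with the factorization $g=u^*v$ of words in $\mathcal W_{2L}$ for compactness and primal attainment, standard SDP duality for zero gap and dual attainment, and the averaging $(Q+\overline Q)/2$ for a real symmetric Gram matrix. The regular-representation realization of $\tau$ and the explicit dual-feasible certificate are harmless but unnecessary, since $\Gamma_L(\tau)=I$ follows directly from $\tau(u^*v)=\delta_{u,v}$ and dual feasibility already follows from strong duality.
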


\begin{proof}
Define the canonical trace on the group algebra by
\begin{equation}
\tau\!\left(\sum_{g\in G}c_gg\right)=c_e;
\label{supp:eq:canonical-trace}
\end{equation}
it extracts the coefficient of the identity. For reduced words $u,v$, the product $u^*v$ equals $e$ exactly when $u=v$, so
\begin{equation}
\tau(u^*v)=\delta_{u,v}.
\label{supp:eq:trace-orthogonality}
\end{equation}
Therefore
\begin{equation}
\Gamma_L(\tau)=I\succ0.
\label{supp:eq:strict-feasibility}
\end{equation}
Thus, the primal problem is strictly feasible (Slater's condition holds).

Next, the feasible set is compact. For any feasible $\ell$, positivity of the $2\times2$ principal submatrix of $\Gamma_L(\ell)$ indexed by $u,v\in\mathcal W_L$ gives the Cauchy--Schwarz inequality for moments,
\begin{equation}
|\ell(u^*v)|^2\leq\ell(u^*u)\,\ell(v^*v)=1,
\label{supp:eq:moment-CS}
\end{equation}
using $u^*u=v^*v=e$. Every reduced word $g$ of length at most $2L$ can be written as $g=u^*v$ with $u,v\in\mathcal W_L$: split a reduced representative as $g=st$ with $|s|,|t|\leq L$ and take $u=s^{-1}$, $v=t$. Equation~\eqref{supp:eq:moment-CS} therefore bounds every coordinate of $\ell$ on $\Span\mathcal W_{2L}$ by one. The feasible set is closed and bounded in a finite-dimensional space, hence compact, and the supremum in Eq.~\eqref{supp:eq:primal-NPA} is a maximum.

The primal optimum is finite by compactness, and Eq.~\eqref{supp:eq:strict-feasibility} provides a strictly feasible primal point. Standard finite-dimensional semidefinite-programming duality therefore gives zero duality gap and attainment of the dual optimum~\cite{VandenbergheBoyd1996}. Writing the dual variable as a positive semidefinite matrix $Q$ gives
\begin{equation}
\lambda e-h_\alpha=\bm w^*Q\bm w,
\qquad Q\succeq0,
\label{supp:eq:dual-gram}
\end{equation}
which is the statement $\lambda e-h_\alpha\in\Sigma_L$. Finally, the multiplication table of the reduced-word basis and the coefficients of $h_\alpha$ are real; if $Q$ is a complex Hermitian Gram matrix, then $(Q+\overline Q)/2$ is real, symmetric, positive semidefinite, and represents the same element.
\end{proof}

\subsection{The trace identity}

Applied to a dual certificate, the canonical trace fixes the trace of its Gram matrix. This gives the uniform bound needed in the endpoint limit.

\begin{lemma}[Trace identity]
\label{supp:lem:trace-identity}
Every Gram representation
\begin{equation}
\lambda e-h_\alpha=\bm w^*Q\bm w,
\qquad Q\succeq0,
\label{supp:eq:generic-dual-gram}
\end{equation}
satisfies
\begin{equation}
\Tr Q=\lambda.
\label{supp:eq:trace-Q}
\end{equation}
In particular, $\|Q\|\leq\lambda$ and $\|Q^{1/2}\|\leq\sqrt\lambda$.
\end{lemma}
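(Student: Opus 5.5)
The plan is to apply the canonical trace $\tau$ of Eq.~\eqref{supp:eq:canonical-trace} to both sides of Eq.~\eqref{supp:eq:generic-dual-gram} and compare.

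On the left, $\tau(\lambda e-h_\alpha)=\lambda$. Every word occurring in $h_\alpha$, namely $a_xb_y$, $a_0$ and $b_0$, is a nontrivial reduced word, so its identity coefficient vanishes.

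On the right, expand $\bm w^*Q\bm w=\sum_{u,v\in\mathcal W_L}Q_{uv}\,u^*v$. By linearity of $\tau$ and the orthogonality relation~\eqref{supp:eq:trace-orthogonality}, $\tau(u^*v)=\delta_{u,v}$, so
\[
\tau(\bm w^*Q\bm w)=\sum_{u}Q_{uu}=\Tr Q.
\]
Equating the two sides gives $\Tr Q=\lambda$.

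One point needs care: the Gram representation must be read as an identity in the group algebra $\C G$, where the reduced words form a basis. Then $\tau$ is well defined on both sides, and no collapse of distinct words into the same element can spoil the computation. Distinct reduced words represent distinct elements of $G$, and the product $u^*v$ equals $e$ only when $u=v$, as already noted in the proof of Proposition~\ref{supp:prop:duality}. So this step needs nothing new.

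For the norm bounds, $Q\succeq0$ means its eigenvalues are nonnegative and sum to $\Tr Q=\lambda$. Hence the largest eigenvalue satisfies $\|Q\|\leq\lambda$. This forces $\lambda\geq0$, which also follows from $\lambda=\Tr Q\geq0$. Finally, $\|Q^{1/2}\|=\|Q\|^{1/2}\leq\sqrt\lambda$ by the spectral theorem.

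I do not expect any genuine obstacle here. The only subtle point is the bookkeeping, namely that the identity is an equality of formal elements of $\C G$ rather than of operators in some particular representation.
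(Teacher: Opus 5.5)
Your proposal is correct and follows the paper's proof exactly: apply the canonical trace $\tau$ to both sides, use $\tau(u^*v)=\delta_{u,v}$ to get $\Tr Q$ on the right and $\lambda$ on the left, and deduce the norm bounds from $Q\succeq0$. Your remark that the identity must be read in the group algebra, where reduced words form a basis, is the same implicit bookkeeping the paper relies on.
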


\begin{proof}
Apply the canonical trace to both sides of Eq.~\eqref{supp:eq:generic-dual-gram}. On the right, Eq.~\eqref{supp:eq:trace-orthogonality} gives
\begin{equation}
\tau(\bm w^*Q\bm w)
=\sum_{i,j}Q_{ij}\,\tau(w_i^*w_j)
=\sum_iQ_{ii}
=\Tr Q.
\label{supp:eq:trace-gram-calculation}
\end{equation}
On the left, every monomial of $h_\alpha$ is a nonidentity group element, so $\tau(h_\alpha)=0$ and $\tau(\lambda e-h_\alpha)=\lambda$. The norm bounds follow because $Q\succeq0$ implies $\|Q\|\leq\Tr Q$.
\end{proof}

\section{The Motzkin polynomial from a scaled positive-operator expectation}
\label{supp:sec:motzkin}

Equation~\eqref{supp:eq:cubic-limit} shows that the quantum advantage first appears at order $T^3$, and Eq.~\eqref{supp:eq:measurement-angle-limit} fixes the corresponding angular scale. We therefore set
\begin{equation}
T=t^2,
\label{supp:eq:T-t}
\end{equation}
and probe the two measurement angles independently at the natural scale, with dimensionless coordinates $x$ and $y$,
\begin{align}
A_0&=Z,&A_1&=\cos(tx)Z+\sin(tx)X,\nonumber\\
B_0&=Z,&B_1&=\cos(ty)Z+\sin(ty)X.
\label{supp:eq:two-parameter-observables}
\end{align}
Let $H_t(x,y)$ denote the matrix of $h_{1-t^2}$ for these observables. For fixed $(x,y)$, the two local angles are $tx$ and $ty$; hence this family resolves all fixed scaled angular directions needed in the limit $t\downarrow0$.

For a unit vector $\psi$, the expectation $\langle\psi,(\omega I-H_t)\psi\rangle$ is nonnegative when $\omega$ is the quantum maximum, and vanishes on a maximizing state. We determine a vector for which its first nonzero term is of order $t^6$, uniformly in $(x,y)$.

The block matrix and the remainder convention in Eq.~\eqref{supp:eq:uniform-big-O-definition} show that the entry coupling $|00\rangle$ to $|11\rangle$ is $O(t^2)$, whereas the entries coupling $|00\rangle$ to $|01\rangle$ and $|10\rangle$ are $O(t^3)$. We therefore test the polynomial vector
\begin{equation}
\zeta_t=(1,At^3,Bt^3,Ct^2)^T.
\label{supp:eq:generic-scaled-vector}
\end{equation}
No claim about the exact maximizing state is used here; Proposition~\ref{supp:prop:motzkin-limit} requires only the explicit vector determined below. Let
\begin{equation}
\Lambda_t=4-2t^2+\frac43t^6.
\label{supp:eq:Lambda-t}
\end{equation}
Direct expansion gives
\begin{equation}
\langle\zeta_t,(\Lambda_tI-H_t)\zeta_t\rangle
=\frac{(4C+xy)^2}{4}t^4+O(t^6).
\label{supp:eq:generic-gap-expansion}
\end{equation}
A finite limit after division by $t^6$ requires the coefficient of $t^4$ to vanish, which fixes
\begin{equation}
C=-\frac{xy}{4}.
\label{supp:eq:C-choice}
\end{equation}
With this value inserted, the expansion continues as
\begin{align}
\langle\zeta_t,(\Lambda_tI-H_t)\zeta_t\rangle
={}&t^6\left[
6A^2-Ax^2y+6B^2-Bxy^2
+\frac{x^4y^2+x^2y^4}{16}
-\frac{x^2y^2}{4}+\frac43
\right]
+O(t^8),
\label{supp:eq:scaled-gap-expansion}
\end{align}
so the coefficient of $t^6$ is
\begin{align}
&6\left(A-\frac{x^2y}{12}\right)^2
+6\left(B-\frac{xy^2}{12}\right)^2
\nonumber\\
&\hspace{18mm}
+\frac43+\frac{x^4y^2+x^2y^4}{48}-\frac{x^2y^2}{4}.
\label{supp:eq:gap-completed-squares}
\end{align}
The first two terms vanish only for
\begin{equation}
A=\frac{x^2y}{12},
\qquad
B=\frac{xy^2}{12}.
\label{supp:eq:A-B-choice}
\end{equation}
The choices that remove the two square terms give
\begin{equation}
\xi_t(x,y)=
\begin{pmatrix}
1&t^3x^2y/12&t^3xy^2/12&-t^2xy/4
\end{pmatrix}^{T}.
\label{supp:eq:scaled-test-vector}
\end{equation}

\begin{proposition}
\label{supp:prop:motzkin-limit}
Locally uniformly in $(x,y)\in\R^2$,
\begin{equation}
\lim_{t\downarrow0}t^{-6}
\langle\xi_t,[\omega_{\rm Q}(1-t^2)I-H_t(x,y)]\xi_t\rangle
=\frac43M\!\left(\frac{x}{2},\frac{y}{2}\right),
\label{supp:eq:motzkin-limit}
\end{equation}
where
\begin{equation}
M(X,Y)=X^4Y^2+X^2Y^4-3X^2Y^2+1.
\label{supp:eq:Motzkin-polynomial}
\end{equation}
\end{proposition}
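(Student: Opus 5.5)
The plan is to split the operator $\omega_{\rm Q}(1-t^2)I-H_t$ as $(\Lambda_tI-H_t)+(\omega_{\rm Q}(1-t^2)-\Lambda_t)I$ and treat the two pieces separately. Here $\Lambda_t=4-2t^2+\frac43t^6$ is the function of Eq.~\eqref{supp:eq:Lambda-t}.

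\emph{Scalar piece.} By the analytic endpoint expansion \eqref{supp:eq:exact-endpoint-expansion} with $T=t^2$, we have $\omega_{\rm Q}(1-t^2)-\Lambda_t=-\frac49t^8+O(t^{10})=O(t^8)$. Since $\|\xi_t\|^2=1+O_K(t^4)$ on compacts, this piece contributes $O_K(t^8)$ to the expectation. It therefore vanishes after division by $t^6$.

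\emph{Operator piece.} The expansions \eqref{supp:eq:generic-gap-expansion}–\eqref{supp:eq:scaled-gap-expansion} were derived for fixed coefficients $A,B,C$, but I need them for coefficients depending polynomially on $(x,y)$, with remainders uniform on compacts. I would make this explicit. The entries of $H_t(x,y)$ are built from $\cos(tx),\sin(tx),\cos(ty),\sin(ty)$, which are entire functions of $(t,x,y)$. The components of $\xi_t$ are polynomials in $(t,x,y)$. Hence $g(t,x,y)=\langle\xi_t,(\Lambda_tI-H_t)\xi_t\rangle$ is real-analytic, in fact entire, in $(t,x,y)$.

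Expanding $g$ in powers of $t$ gives Taylor coefficients that are polynomials in $(x,y)$. Taylor's theorem with integral remainder, applied in $t$ with $(x,y)$ ranging over a compact $K$, bounds the remainder after order $t^7$ by $C_Kt^8$. This is exactly the convention \eqref{supp:eq:uniform-big-O-definition}.

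I would then identify the coefficients.
\begin{itemize}
\item The $t^0$ and $t^2$ terms cancel, because $\Lambda_t$ was chosen to match the local value.
\item The odd orders vanish: $g$ is even in $t$, since replacing $t\to-t$ corresponds to $(x,y)\to(-x,-y)$, and $\xi_t$ and $H_t$ are invariant under that combined map.
\item The $t^4$ coefficient is $(4C+xy)^2/4$, which vanishes for $C=-xy/4$.
\item The $t^6$ coefficient is the completed-square expression \eqref{supp:eq:gap-completed-squares}, and the choice \eqref{supp:eq:A-B-choice} kills both squares.
\end{itemize}
Consequently $t^{-6}g\to\frac43+\frac{x^4y^2+x^2y^4}{48}-\frac{x^2y^2}{4}$ locally uniformly.

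Finally, I would check that this limit equals $\frac43M(x/2,y/2)$. Substituting $X=x/2$ and $Y=y/2$ gives $X^4Y^2=x^4y^2/64$, so $\frac43\cdot\frac{x^4y^2}{64}=\frac{x^4y^2}{48}$. Similarly $\frac43\cdot3X^2Y^2=\frac43\cdot\frac{3x^2y^2}{16}=\frac{x^2y^2}{4}$, and the constant term gives $\frac43$.

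The main obstacle is verifying the order-$t^4$ and order-$t^6$ coefficients of $g$. This is a careful but routine expansion of the $4\times4$ quadratic form using the block matrix \eqref{supp:eq:block-matrix} with $c=\cos(tx)$, $d=\cos(ty)$, $r=\sin(tx)$, $s=\sin(ty)$ (signs absorbed appropriately). The key bookkeeping is as follows.
\begin{itemize}
\item The diagonal entries expand as $\Omega=4-\frac{x^2y^2}{4}t^4+O(t^6)$-type terms; for example, $1+c+d-cd=2-(1-c)(1-d)+\dots$ requires tracking through $t^6$.
\item The $|00\rangle$–$|11\rangle$ entry $-rs=-xyt^2+\frac{x^3y+xy^3}{6}t^4+\dots$ pairs with the $Ct^2$ component.
\item The $|00\rangle$–$|01\rangle$ and $|00\rangle$–$|10\rangle$ entries are $(1-c)s\approx\frac{x^2y}{2}t^3$ and $(1-d)r\approx\frac{xy^2}{2}t^3$. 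These pair with the $t^3$ components to produce the linear terms $-Ax^2y$ and $-Bxy^2$.
\end{itemize}
I would collect these contributions order by order, possibly confirming the results symbolically, to reproduce \eqref{supp:eq:scaled-gap-expansion}. Uniformity then follows from the analyticity argument above.
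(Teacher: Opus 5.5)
Your plan matches the paper's proof: split off $\Lambda_t=4-2t^2+\frac43t^6$, absorb $(\omega_{\rm Q}(1-t^2)-\Lambda_t)\|\xi_t\|^2=O_K(t^8)$ using the endpoint expansion, and read off the $t^4$ and $t^6$ coefficients of $g(t,x,y)=\langle\xi_t,(\Lambda_tI-H_t)\xi_t\rangle$. Your remark that $g$ is entire, so Taylor's theorem in $t$ gives the $O_K$ remainder, justifies local uniformity more explicitly than the paper does. The coefficient identities you rely on are correct. With $C=-xy/4$, the $t^6$ coefficient is $6A^2-Ax^2y+6B^2-Bxy^2+\frac{x^4y^2+x^2y^4}{16}-\frac{x^2y^2}{4}+\frac43$, as in the paper.

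\textbf{The gap: vanishing of the odd orders.} Your justification for this step does not work. The symmetry you invoke, $g(-t,-x,-y)=g(t,x,y)$, holds trivially. $H_t(x,y)$ depends only on $(tx,ty,t^2)$, and each component of $\xi_t(x,y)$ is a function of $(tx,ty)$ alone. So the symmetry says nothing about parity in $t$ at fixed $(x,y)$; for instance, $tx$ has the same invariance and is odd in $t$.

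What you actually need is invariance under $(x,y)\to(-x,-y)$ at fixed $t$, and this is not automatic. Flipping both angles sends $r\to-r$ and $s\to-s$, which changes $H_t$. It does so only by a conjugation: $H_t(-x,-y)=UH_t(x,y)U$ with $U=Z\otimes Z$, and also $\xi_t(-x,-y)=U\xi_t(x,y)$.

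There are two equivalent ways to close this. One is the paper's route: $H_{-t}(x,y)=UH_t(x,y)U$ and $\xi_{-t}(x,y)=U\xi_t(x,y)$. The other is a direct parity count. The entries $H_{12},H_{13},H_{24},H_{34}$ are odd in $t$ and the remaining entries are even. The second and third components of $\xi_t$ are odd and the other two are even. Hence every term of the quadratic form is even in $t$. As you wrote it, nothing controls the $t^1,t^3,t^5$ coefficients, and those must vanish for $t^{-6}g$ to converge.

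\textbf{A smaller slip.} The identity $\Omega=1+c+d-cd=2-(1-c)(1-d)$ is exact, so $\Omega=2-\frac{x^2y^2}{4}t^4+\frac{x^4y^2+x^2y^4}{48}t^6+O_K(t^8)$. It is the $|00\rangle$ diagonal entry $\Omega+2-2t^2$ that starts at $4$, not $\Omega$ itself.
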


\begin{proof}
Fix a compact set $K\subset\R^2$; all Taylor expansions of sines and cosines below are uniform on $K$. From Eqs.~\eqref{supp:eq:Lambda-t} and \eqref{supp:eq:scaled-test-vector},
\begin{align}
\Lambda_t\|\xi_t\|^2
={}&4-2t^2+\frac{x^2y^2}{4}t^4
\nonumber\\
&+t^6\left(
\frac43+\frac{x^4y^2}{36}+\frac{x^2y^4}{36}-\frac{x^2y^2}{8}
\right)+O_K(t^8),
\label{supp:eq:Lambda-norm-expansion}
\end{align}
and direct multiplication by $H_t(x,y)$ gives
\begin{align}
\langle\xi_t,H_t\xi_t\rangle
={}&4-2t^2+\frac{x^2y^2}{4}t^4
\nonumber\\
&+t^6\left(
\frac{x^4y^2}{144}+\frac{x^2y^4}{144}+\frac{x^2y^2}{8}
\right)+O_K(t^8).
\label{supp:eq:H-expectation-expansion}
\end{align}
Both scalar expressions are even in $t$, which is why no odd powers appear: with $U=Z\otimes Z$,
\begin{equation}
H_{-t}(x,y)=UH_t(x,y)U,
\qquad
\xi_{-t}(x,y)=U\xi_t(x,y).
\label{supp:eq:t-evenness}
\end{equation}
Subtracting Eq.~\eqref{supp:eq:H-expectation-expansion} from Eq.~\eqref{supp:eq:Lambda-norm-expansion},
\begin{align}
\langle\xi_t,(\Lambda_tI-H_t)\xi_t\rangle
=t^6\left[
\frac43+\frac{x^4y^2}{48}+\frac{x^2y^4}{48}-\frac{x^2y^2}{4}
\right]+O_K(t^8).
\label{supp:eq:motzkin-leading-term}
\end{align}
The bracket equals $\frac43M(x/2,y/2)$, as substituting $X=x/2$, $Y=y/2$ into Eq.~\eqref{supp:eq:Motzkin-polynomial} confirms. Equation~\eqref{supp:eq:exact-endpoint-expansion}, with $T=t^2$, gives the quantified identity
\begin{equation}
\omega_{\rm Q}(1-t^2)-\Lambda_t
=-\frac49t^8+\frac4{81}t^{10}+R_6(t^2),
\label{supp:eq:omega-Lambda}
\end{equation}
where $|R_6(t^2)|\leq Ct^{12}$ for sufficiently small $t$. The norm of $\xi_t$ is uniformly bounded on $K$. Hence the change produced by replacing $\Lambda_t$ with the exact quantum value is bounded in absolute value by $C_Kt^8$ and vanishes after division by $t^6$. This proves Eq.~\eqref{supp:eq:motzkin-limit}.
\end{proof}

The limit $M$ is the Motzkin polynomial~\cite{Motzkin1967}. It is nonnegative by the arithmetic-geometric mean inequality applied to $X^4Y^2$, $X^2Y^4$, and $1$,
\begin{equation}
X^4Y^2+X^2Y^4+1\geq3X^2Y^2.
\label{supp:eq:motzkin-nonnegative}
\end{equation}
It is not, however, a sum of squares of polynomials, as we prove next.

\section{The Motzkin polynomial plus a nonnegative constant is not a sum of squares}
\label{supp:sec:not-sos}

The Motzkin polynomial is nonnegative but not a sum of squares~\cite{Hilbert1888,Motzkin1967}. We need the stronger statement that $M+c$ is not a sum of squares for any $c\geq0$. Its proof follows from the Newton polytope: the allowed exponents in each square factor are so restricted that the coefficient of $X^2Y^2$ in a sum of squares must be nonnegative, whereas in $M+c$ it is $-3$.

For a polynomial
\begin{equation}
p(X,Y)=\sum_{i,j}p_{ij}X^iY^j,
\label{supp:eq:polynomial-support}
\end{equation}
its Newton polytope is the convex hull of its exponent set,
\begin{equation}
\Newt(p)=\operatorname{conv}\{(i,j):p_{ij}\neq0\}.
\label{supp:eq:newton-definition}
\end{equation}

\begin{lemma}[Half-Newton-polytope property~\cite{Reznick1978}]
\label{supp:lem:half-newton}
If $p=\sum_kq_k^2$ with real polynomials $q_k$, then
\begin{equation}
\Newt(q_k)\subseteq\frac12\Newt(p)
\label{supp:eq:half-newton}
\end{equation}
for every $k$.
\end{lemma}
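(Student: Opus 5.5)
The approach is to compare support functions. Two compact convex sets $P\subseteq P'$ in $\R^2$ satisfy the inclusion exactly when $\max_{a\in P}\langle\mu,a\rangle\leq\max_{a\in P'}\langle\mu,a\rangle$ for every direction $\mu\in\R^2$. Moreover, the maximum of a linear functional over a Newton polytope equals its maximum over the exponent set of the polynomial. So it suffices to fix an arbitrary $\mu\in\R^2$ and prove
\begin{equation*}
\max_{(i,j)\in\operatorname{supp}q_k}\langle\mu,(i,j)\rangle\;\leq\;\tfrac12\max_{(i,j)\in\operatorname{supp}p}\langle\mu,(i,j)\rangle
\end{equation*}
for every $k$. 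Here I discard the $q_k$ that vanish identically, since they contribute nothing.

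Fix $\mu$ and call $\langle\mu,(i,j)\rangle$ the $\mu$-weight of the monomial $X^iY^j$. For each $k$ let $m_k$ be the largest $\mu$-weight occurring in $q_k$, and let $\hat q_k$ be the sum of the terms of $q_k$ that have this weight. Set $m=\max_km_k$ and $J=\{k:m_k=m\}$, which is nonempty.

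Since $\mu$-weight is additive under multiplication of monomials, every monomial of $q_k^2$ has weight at most $2m_k\leq 2m$. Hence the part of $p=\sum_kq_k^2$ of weight exactly $2m$ equals $\sum_{k\in J}\hat q_k^{\,2}$. The squares with $k\notin J$ have all weights strictly below $2m$. For $k\in J$, the weight-$2m$ part of $q_k^2$ comes only from products of two top-weight terms, which gives $\hat q_k^{\,2}$.

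The key step, and the only nontrivial point, is that $\sum_{k\in J}\hat q_k^{\,2}$ is not the zero polynomial. Each $\hat q_k$ is a nonzero real polynomial, so its zero set in $\R^2$ has empty interior. Pick $k_0\in J$ and a real point $z$ with $\hat q_{k_0}(z)\neq0$. Then $\sum_{k\in J}\hat q_k(z)^2\geq\hat q_{k_0}(z)^2>0$. This uses real coefficients essentially, because over $\C$ squares could cancel.

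Consequently $p$ contains a monomial of $\mu$-weight exactly $2m$, so $\max_{\operatorname{supp}p}\langle\mu,\cdot\rangle\geq 2m\geq 2m_k$ for every $k$. This is the required support-function inequality. Since $\mu$ was arbitrary, the separation characterisation of convex hulls gives $\Newt(q_k)\subseteq\frac12\Newt(p)$. I expect the non-cancellation step to be the only place where care is needed. Everything else is bookkeeping with additive weights.
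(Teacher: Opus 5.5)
Your proof is correct. Its core idea matches the paper's: with respect to a linear functional, the top-weight part of a real sum of squares cannot cancel. Where you differ is in how you secure that non-cancellation.

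The paper argues by contradiction. It separates a point of $S=\bigcup_k\operatorname{supp}q_k$ from $\frac12\Newt(p)$. It then perturbs the functional so that it has a \emph{unique} maximizer $a$ on $S$, while keeping the strict separation. The coefficient of $X^{2a}$ in $p$ is then literally $\sum_k(\text{coefficient of }X^{a}\text{ in }q_k)^2>0$, a statement about real numbers.

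You instead fix an arbitrary direction $\mu$ and allow the top face to be an edge. You get non-cancellation of $\sum_{k\in J}\hat q_k^{\,2}$ by evaluating at a real point where $\hat q_{k_0}\neq0$, and the support-function criterion then gives the inclusion directly.

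What each route buys:
\begin{itemize}
\item Your version needs no perturbation to a generic functional and no contradiction.
\item Every monomial of $p$ trivially has $\mu$-weight at most $2m$, so your argument computes the support function of $\Newt(p)$ exactly. It therefore yields the sharper identity $\Newt(p)=2\operatorname{conv}S$ at no extra cost.
\item The paper's version keeps everything at the level of a single coefficient and never evaluates a polynomial.
\end{itemize}

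The only loose end is trivial: if every $q_k$ vanishes, then $J=\emptyset$ and both polytopes are empty. That case should be set aside before you assert $J\neq\emptyset$. Otherwise your argument shows $p\neq0$, so the support-function criterion is applied to nonempty compact convex sets.
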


\begin{proof}[Proof (included for self-containment)]
Let $S$ be the union of the supports of all $q_k$ and let $P_S=\operatorname{conv}S$. Suppose, for contradiction, that $P_S\not\subseteq\frac12\Newt(p)$. Then some point of the finite set $S$ lies outside $\frac12\Newt(p)$, and a separating linear functional exists; perturbing it slightly, we may choose a linear functional $\varphi$ with a unique maximizer $a$ over $S$ satisfying
\begin{equation}
2\varphi(a)>\max_{z\in\Newt(p)}\varphi(z).
\label{supp:eq:separation-functional}
\end{equation}
Consider the coefficient of the monomial with exponent $2a$ in $\sum_kq_k^2$. A product of monomials with exponents $b,c\in S$ contributes to it only if $b+c=2a$, and then
\begin{equation}
\varphi(b)+\varphi(c)=2\varphi(a).
\label{supp:eq:phi-equality}
\end{equation}
Since $a$ is the unique maximizer of $\varphi$ on $S$, this forces $b=c=a$. The coefficient at $2a$ is therefore the sum of the squares of the coefficients of $X^{a_1}Y^{a_2}$ in the $q_k$, which is strictly positive. Hence $2a$ lies in the support of $p$, contradicting Eq.~\eqref{supp:eq:separation-functional}.
\end{proof}

\begin{lemma}
\label{supp:lem:motzkin-not-sos}
For every $c\geq0$, the polynomial $M(X,Y)+c$ is not a sum of squares of real polynomials.
\end{lemma}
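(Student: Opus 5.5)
The argument is by contradiction and rests on the half-Newton-polytope property (Lemma~\ref{supp:lem:half-newton}). Suppose $M+c=\sum_k q_k^2$ with real polynomials $q_k$ and some $c\geq0$. I first determine $\Newt(M+c)$. The support of $M+c$ consists of $(4,2)$, $(2,4)$, $(2,2)$ and $(0,0)$; the last point is present because the constant term is $1+c\geq1>0$. The point $(2,2)$ equals $\tfrac13[(0,0)+(4,2)+(2,4)]$, so it lies inside the triangle spanned by the other three. Hence $\Newt(M+c)$ is the triangle with vertices $(0,0)$, $(4,2)$, $(2,4)$, and $\tfrac12\Newt(M+c)$ is the triangle with vertices $(0,0)$, $(2,1)$, $(1,2)$.

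Next I list the lattice points of this half triangle. It is cut out by $2i-j\geq0$, $2j-i\geq0$ and $i+j\leq3$. For $i+j\leq1$ only $(0,0)$ qualifies. For $i+j=2$ only $(1,1)$ qualifies, since $(2,0)$ and $(0,2)$ violate the first two inequalities. For $i+j=3$ only $(2,1)$ and $(1,2)$ qualify. By Lemma~\ref{supp:lem:half-newton}, every $q_k$ therefore has the form $a_k+b_kXY+c_kX^2Y+d_kXY^2$.

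Finally I compare coefficients of $X^2Y^2$. The pairwise sums of exponents from $\{(0,0),(1,1),(2,1),(1,2)\}$ that equal $(2,2)$ come only from $(1,1)+(1,1)$. The other sums are $(0,0)+(2,2)$, which is not available, and sums of degree $3$ or more in other positions. So the coefficient of $X^2Y^2$ in $\sum_k q_k^2$ is $\sum_k b_k^2\geq0$. In $M+c$ that coefficient is $-3$, which is a contradiction.

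There is no real obstacle here. The only care needed is the lattice-point enumeration, and checking that the constant term $1+c$ is nonzero, so that the vertex $(0,0)$ and the Newton polytope do not change as $c$ varies.
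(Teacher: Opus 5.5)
Your proof is correct and matches the paper's argument step for step: you identify $\Newt(M+c)$ as the triangle with vertices $(0,0)$, $(4,2)$, $(2,4)$, use Lemma~\ref{supp:lem:half-newton} to restrict each $q_k$ to the monomials $1$, $XY$, $X^2Y$, $XY^2$, and then compare the $X^2Y^2$ coefficient, $\sum_k b_k^2\geq0$ against $-3$, to reach the contradiction. Your explicit lattice-point enumeration and your centroid check that $(2,2)$ lies inside the triangle add a little detail, but the route is the same as the paper's.
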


\begin{proof}
The constant coefficient of $M+c$ is $1+c>0$, so its exponent set is $\{(0,0),(4,2),(2,4),(2,2)\}$ and
\begin{equation}
\Newt(M+c)=\operatorname{conv}\{(0,0),(4,2),(2,4)\},
\label{supp:eq:motzkin-newton}
\end{equation}
the point $(2,2)$ being interior to the triangle. Half of this triangle has vertices $(0,0)$, $(2,1)$, and $(1,2)$; its points are characterized by
\begin{equation}
i,j\geq0,
\qquad i\leq2j,
\qquad j\leq2i,
\qquad i+j\leq3,
\label{supp:eq:half-triangle-inequalities}
\end{equation}
and the only integer solutions are
\begin{equation}
(0,0),\quad(1,1),\quad(2,1),\quad(1,2).
\label{supp:eq:half-triangle-points}
\end{equation}
By Lemma~\ref{supp:lem:half-newton}, every polynomial in a hypothetical representation $M+c=\sum_kq_k^2$ has the form
\begin{equation}
q_k=a_k+b_kXY+c_kX^2Y+d_kXY^2.
\label{supp:eq:qk-form}
\end{equation}
Among the four allowed exponent vectors, the only pair summing to $(2,2)$ is $(1,1)+(1,1)$. The coefficient of $X^2Y^2$ in the sum of squares is therefore
\begin{equation}
\sum_kb_k^2\geq0,
\label{supp:eq:positive-middle-coefficient}
\end{equation}
whereas the coefficient of $X^2Y^2$ in $M+c$ is $-3$. This contradiction proves the lemma.
\end{proof}

\section{Fixed NPA levels yield polynomial sum-of-squares limits}
\label{supp:sec:compactness}

Along the family $\xi_t$, the expectation of $\omega_{\rm Q}(1-t^2)I-H_t$, divided by $t^6$, converges to the Motzkin polynomial. Evaluating a level-$L$ sum-of-squares certificate on the same family gives a squared norm whose entries are analytic in $t$ and polynomial in $(x,y)$. Since $L$ is fixed, all Taylor truncations lie in one finite-dimensional polynomial space, whilst the trace identity bounds the certificate matrices. The next lemma records the consequent compactness statement.

\begin{lemma}[Fixed-level sum-of-squares limit]
\label{supp:lem:fixed-level-limit}
Fix $L$ and enumerate $\mathcal W_L$ as $w_1,\ldots,w_N$. For the representation in Eq.~\eqref{supp:eq:two-parameter-observables}, define
\begin{equation}
z_t(x,y)=
\begin{pmatrix}
\pi_{t,x,y}(w_1)\xi_t(x,y)\\
\vdots\\
\pi_{t,x,y}(w_N)\xi_t(x,y)
\end{pmatrix}
\in\C^{4N},
\label{supp:eq:z-vector}
\end{equation}
where $\pi_{t,x,y}$ evaluates each word on the observables of Eq.~\eqref{supp:eq:two-parameter-observables}. Let $t_n\downarrow0$ and let $Q_n\succeq0$ have uniformly bounded trace. Suppose that for an integer $m\geq0$,
\begin{equation}
t_n^{-2m}\left\|(Q_n^{1/2}\otimes I_4)z_{t_n}(x,y)\right\|^2
\longrightarrow P(x,y)
\label{supp:eq:compactness-assumption}
\end{equation}
locally uniformly, where $P$ is a real polynomial. Then $P$ is a sum of squares of real polynomials.
\end{lemma}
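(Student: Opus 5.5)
The plan is to Taylor-expand $z_t(x,y)$ in $t$ to order $m$. The rescaled certificate vectors then split into a part lying in one fixed finite-dimensional space of vector-valued polynomials and a remainder that vanishes. Norm equivalence in finite dimensions will give a convergent subsequence, and its limit exhibits $P$ as a sum of squares. Throughout I use the notation $O_K$ of Eq.~\eqref{supp:eq:uniform-big-O-definition}.

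\emph{Step 1 (Taylor structure).} Each entry of $\pi_{t,x,y}(w_i)$ is a polynomial in $\cos(tx),\sin(tx),\cos(ty),\sin(ty)$, because words in $\mathcal W_L$ have length at most $L$. The entries of $\xi_t(x,y)$ are monomials in $t,x,y$. Hence every component of $z_t(x,y)$ is an entire function of $t$ whose $t^k$ coefficient is a polynomial in $(x,y)$ of degree at most $k+c_L$, with $c_L$ depending only on $L$; this holds because the $t^k$ coefficient of $\cos(tx)$ or $\sin(tx)$ has degree $k$ in $x$, and $\xi_t$ contributes bounded degree. Thus
\[
z_t(x,y)=\sum_{k=0}^{m}t^k z_k(x,y)+\rho_t(x,y),
\qquad \|\rho_t\|=O_K(t^{m+1}),
\]
with each $z_k\in\mathcal F$. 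Here $\mathcal F$ is the finite-dimensional real-coefficient space of $\C^{4N}$-valued polynomials of degree at most $D=m+c_L$.

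\emph{Step 2 (the rescaled vectors).} Set $S_n=Q_n^{1/2}\otimes I_4$. By the trace bound, $\|S_n\|\le\sup_n(\Tr Q_n)^{1/2}$ is uniformly bounded. Define
\[
y_n=t_n^{-m}S_nz_{t_n},\qquad p_n=\sum_{k=0}^m t_n^{k-m}S_nz_k\in\mathcal F_{\C}.
\]
Then $y_n=p_n+t_n^{-m}S_n\rho_{t_n}$, and the last term is $O_K(t_n)$. The individual terms of $p_n$ carry negative powers $t_n^{k-m}$ and may blow up. I expect this to be the main obstacle, and the plan circumvents it rather than tracking cancellations. Fix the closed unit disk $K_0$. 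Hypothesis~\eqref{supp:eq:compactness-assumption} says $\|y_n\|^2\to P$ uniformly on $K_0$. Since $P$ is continuous, $\sup_{K_0}\|y_n\|$ is bounded, and hence so is $\sup_{K_0}\|p_n\|$. On the finite-dimensional space $\mathcal F_{\C}$, the quantity $\sup_{K_0}\|\cdot\|$ is a norm, since a polynomial vanishing on a disk is zero. All norms there are equivalent, so the coefficients of $p_n$ are bounded. After passing to a subsequence, $p_n\to p$ coefficientwise, for some $\C^{4N}$-valued polynomial $p$.

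\emph{Step 3 (identifying the limit).} Coefficientwise convergence in a fixed finite-dimensional polynomial space implies locally uniform convergence, so $\|p_n\|^2\to\|p\|^2$ locally uniformly. The remainder is $O_K(t_n)$ on every compact $K$, so $\|y_n\|^2-\|p_n\|^2\to0$ locally uniformly. Therefore $P=\|p\|^2=\sum_j\bigl[(\operatorname{Re}p_j)^2+(\operatorname{Im}p_j)^2\bigr]$, where $\operatorname{Re}p_j,\operatorname{Im}p_j$ are the real polynomials formed from the real and imaginary parts of the coefficients, and the identity holds because $x,y$ are real. This exhibits $P$ as a sum of squares of real polynomials.
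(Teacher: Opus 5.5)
Your proposal is correct and follows essentially the same route as the paper's proof. Both arguments truncate the Taylor expansion of $z_t$ at order $m$ and bound $Q_n^{1/2}$ uniformly via the trace. Both then bound the truncated polynomials $p_n$ on a fixed compact set with nonempty interior and use norm equivalence in a fixed finite-dimensional polynomial space to extract a coefficientwise convergent subsequence. Finally, both identify $P=\|q\|^2$ and split it into real and imaginary parts. The only differences are cosmetic: you use the unit disk instead of $[-1,1]^2$, and you get $P=\|q\|^2$ on all of $\R^2$ directly rather than extending a polynomial identity from $K_0$.
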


\begin{proof}
Each observable in Eq.~\eqref{supp:eq:two-parameter-observables} is an analytic function of $t$ whose Taylor coefficients are polynomials in $x$ or $y$. For fixed $L$, multiplying these expansions and the polynomial vector $\xi_t$ gives
\begin{equation}
z_t(x,y)=\sum_{r=0}^{m}t^rz_r(x,y)+R_{m+1}(t,x,y),
\label{supp:eq:z-Taylor}
\end{equation}
where every coordinate of every $z_r$ is a polynomial in $(x,y)$, of total degree bounded by a number depending only on $L$ and $m$, and where, for every compact $K\subset\R^2$,
\begin{equation}
\sup_{(x,y)\in K}\|R_{m+1}(t,x,y)\|=O_K(t^{m+1}).
\label{supp:eq:remainder-bound}
\end{equation}

Set $C_n=Q_n^{1/2}$. The trace bound gives
\begin{equation}
\sup_n\|C_n\|<\infty.
\label{supp:eq:Cn-bound}
\end{equation}
Define the vector-valued polynomial
\begin{equation}
p_n(x,y)=t_n^{-m}(C_n\otimes I_4)
\sum_{r=0}^{m}t_n^rz_r(x,y).
\label{supp:eq:pn-definition}
\end{equation}
Equations~\eqref{supp:eq:remainder-bound} and \eqref{supp:eq:Cn-bound} imply
\begin{equation}
\sup_K\left\|p_n-t_n^{-m}(C_n\otimes I_4)z_{t_n}\right\|=O_K(t_n),
\label{supp:eq:pn-remainder}
\end{equation}
so the polynomial $p_n$ and the rescaled certificate vector are asymptotically indistinguishable on compacts.

Take $K_0=[-1,1]^2$. The locally uniform convergence of the squared norms in Eq.~\eqref{supp:eq:compactness-assumption} bounds the vectors $t_n^{-m}(C_n\otimes I_4)z_{t_n}$ on $K_0$, and Eq.~\eqref{supp:eq:pn-remainder} then bounds $p_n$ on the same set. All $p_n$ belong to one fixed finite-dimensional space of vector-valued polynomials, and on such a space the supremum norm over a compact set with nonempty interior is equivalent to the norm on coefficients. The coefficients of $p_n$ are therefore uniformly bounded, and after passing to a subsequence, $p_n$ converges coefficientwise to a polynomial vector $q(x,y)$; coefficientwise convergence in a fixed finite-dimensional polynomial space is locally uniform.

Let $v_n=t_n^{-m}(C_n\otimes I_4)z_{t_n}$. The sequences $p_n$ and $v_n$ are uniformly bounded on $K_0$, and Eq.~\eqref{supp:eq:pn-remainder} gives $\|p_n-v_n\|\to0$ uniformly there. Therefore
\begin{equation}
\bigl|\|p_n\|^2-\|v_n\|^2\bigr|
\leq(\|p_n\|+\|v_n\|)\,\|p_n-v_n\|
\longrightarrow0
\label{supp:eq:squared-norm-comparison}
\end{equation}
uniformly on $K_0$. Combining this estimate with Eq.~\eqref{supp:eq:compactness-assumption} and the coefficientwise convergence of $p_n$ gives
\begin{equation}
P(x,y)=\|q(x,y)\|^2
\label{supp:eq:P-norm-square}
\end{equation}
on $K_0$. Both sides are polynomials, so the identity holds on all of $\R^2$. Writing each coordinate of $q$ as its real part plus $i$ times its imaginary part expresses $P$ as a sum of squares of real polynomials.
\end{proof}

\section{No finite level is exact: proof of the divergence}
\label{supp:sec:main-proof}

\begin{theorem}
\label{supp:thm:fixed-level}
For every finite $L\geq1$,
\begin{equation}
\lim_{\alpha\uparrow1}
\frac{\omega_L(\alpha)-\omega_{\rm Q}(\alpha)}{(1-\alpha)^3}=+\infty.
\label{supp:eq:main-divergence}
\end{equation}
\end{theorem}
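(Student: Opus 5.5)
We argue by contradiction, assembling the results already established. Since $\omega_L(\alpha)\geq\omega_{\rm Q}(\alpha)$ by Eq.~\eqref{supp:eq:outer-bound}, the quotient in Eq.~\eqref{supp:eq:main-divergence} is nonnegative. If it did not tend to $+\infty$, there would be a sequence $\alpha_n=1-t_n^2\uparrow1$ along which it stays bounded. After passing to a subsequence we may write $\omega_L(1-t_n^2)=\omega_{\rm Q}(1-t_n^2)+r_nt_n^6$ with $r_n\to r\geq0$. By Proposition~\ref{supp:prop:duality}, for each $n$ there is an attained certificate
\[
\omega_L(1-t_n^2)e-h_{1-t_n^2}=\bm w^*Q_n\bm w,\qquad Q_n\succeq0,
\]
and Lemma~\ref{supp:lem:trace-identity} gives $\Tr Q_n=\omega_L(1-t_n^2)\leq6$, so the traces are uniformly bounded.

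Next, evaluate the certificate in the two-qubit representation $\pi_{t_n,x,y}$ of Eq.~\eqref{supp:eq:two-parameter-observables}. This is a genuine representation of the group $G$: the observables square to the identity and Alice's operators commute with Bob's on $\C^2\otimes\C^2$. Hence the algebraic identity becomes an operator identity. Take its expectation in $\xi_{t_n}(x,y)$. Writing $Q_n=C_n^*C_n$ with $C_n=Q_n^{1/2}$, the right side is $\|(C_n\otimes I_4)z_{t_n}(x,y)\|^2$, with $z_t$ as in Lemma~\ref{supp:lem:fixed-level-limit}. One must check that the index and Kronecker conventions match, i.e. that $\langle\xi,\pi(\bm w^*Q\bm w)\xi\rangle=\sum_{ij}Q_{ij}\langle\pi(w_i)\xi,\pi(w_j)\xi\rangle$, using that the words are unitary self-adjoint products with $\pi(w^*)=\pi(w)^*$. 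The left side equals
\[
\langle\xi_{t_n},[\omega_{\rm Q}(1-t_n^2)I-H_{t_n}]\xi_{t_n}\rangle+r_nt_n^6\|\xi_{t_n}\|^2 .
\]
Dividing by $t_n^6$, Proposition~\ref{supp:prop:motzkin-limit} handles the first term. Since $\|\xi_t\|^2=1+O_K(t^4)$, the second term converges locally uniformly to $r$. Thus the hypothesis of Lemma~\ref{supp:lem:fixed-level-limit} holds with $m=3$ and $P(x,y)=\frac43M(x/2,y/2)+r$.

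Lemma~\ref{supp:lem:fixed-level-limit} then says that $P$ is a sum of squares of real polynomials. Substituting $x=2X$, $y=2Y$ and multiplying by $3/4$ preserves this property, so $M(X,Y)+\frac34r$ would be a sum of squares, with $\frac34r\geq0$. This contradicts Lemma~\ref{supp:lem:motzkin-not-sos}. Therefore the quotient has no bounded subsequence, which means the limit exists and equals $+\infty$.

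The analytic content sits in the cited lemmas, so the main obstacle is bookkeeping. We must confirm that the remainder in Proposition~\ref{supp:prop:motzkin-limit} and the $r_n$ term are genuinely locally uniform, which is what the compactness lemma requires. We must also handle the subsequence extraction correctly: first on $r_n$, then inside the lemma on the polynomial vectors. Neither step needs any information about $Q_n$ beyond its bounded trace.
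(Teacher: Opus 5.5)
Your proposal is correct and follows essentially the same route as the paper's proof: you argue by contradiction along a bounded subsequence with $r_n\to r$, use attained certificates whose traces are bounded by the trace identity, and evaluate them in the scaled two-qubit representation on $\xi_{t_n}$ to get the limit $\frac43M(x/2,y/2)+r$; the compactness lemma with $m=3$ and the Newton-polytope obstruction for $M+\tfrac34 r$ then finish the argument. The only step you leave unjustified is $\omega_L\le 6$, which follows because the moment Cauchy--Schwarz inequality bounds every marginal and correlator by one.
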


\begin{proof}
Fix $L$. The quotient is nonnegative by Eq.~\eqref{supp:eq:outer-bound}. Suppose it does not tend to $+\infty$; then it has a bounded subsequence, so there exist $t_n\downarrow0$ and bounded $r_n\geq0$ such that
\begin{equation}
\omega_L(1-t_n^2)=\omega_{\rm Q}(1-t_n^2)+r_nt_n^6.
\label{supp:eq:bounded-excess}
\end{equation}
After passing to a further subsequence, assume
\begin{equation}
r_n\longrightarrow r\geq0.
\label{supp:eq:r-limit}
\end{equation}

By Proposition~\ref{supp:prop:duality}, there are certificates $Q_n\succeq0$ with
\begin{equation}
\omega_L(1-t_n^2)\,e-h_{1-t_n^2}=\bm w^*Q_n\bm w.
\label{supp:eq:certificate-sequence}
\end{equation}
Lemma~\ref{supp:lem:trace-identity} gives
\begin{equation}
\Tr Q_n=\omega_L(1-t_n^2).
\label{supp:eq:trace-sequence}
\end{equation}
The Cauchy--Schwarz inequality for moments, Eq.~\eqref{supp:eq:moment-CS}, bounds each of the four correlators and each marginal by one, so $\omega_L(\alpha)\leq6$ for all $0\leq\alpha\leq1$, and the traces in Eq.~\eqref{supp:eq:trace-sequence} are uniformly bounded.

Evaluate Eq.~\eqref{supp:eq:certificate-sequence} in the qubit representation of Eq.~\eqref{supp:eq:two-parameter-observables} and on the vector $\xi_{t_n}(x,y)$. The Gram form turns the right-hand side into a squared norm,
\begin{align}
&\langle\xi_{t_n},
[\omega_L(1-t_n^2)I-H_{t_n}(x,y)]\xi_{t_n}\rangle
\nonumber\\
&\hspace{22mm}
=\left\|(Q_n^{1/2}\otimes I_4)z_{t_n}(x,y)\right\|^2,
\label{supp:eq:evaluated-certificate}
\end{align}
with $z_t$ as in Eq.~\eqref{supp:eq:z-vector}. Divide by $t_n^6$ and use Eq.~\eqref{supp:eq:bounded-excess} to split the left-hand side into the expectation containing $\omega_{\rm Q}(1-t_n^2)I-H_{t_n}$ and the excess $r_nt_n^6\|\xi_{t_n}\|^2$. By Eq.~\eqref{supp:eq:motzkin-limit} and $\|\xi_{t_n}\|^2\to1$ locally uniformly,
\begin{equation}
t_n^{-6}
\left\|(Q_n^{1/2}\otimes I_4)z_{t_n}(x,y)\right\|^2
\longrightarrow
\frac43M\!\left(\frac{x}{2},\frac{y}{2}\right)+r
\label{supp:eq:certificate-limit}
\end{equation}
locally uniformly. Apply Lemma~\ref{supp:lem:fixed-level-limit} with $m=3$: the limit is a sum of squares of real polynomials. The invertible change of variables $X=x/2$, $Y=y/2$, followed by multiplication by $3/4$, would then make
\begin{equation}
M(X,Y)+\frac{3r}{4}
\label{supp:eq:forbidden-SOS}
\end{equation}
a sum of squares of real polynomials, contradicting Lemma~\ref{supp:lem:motzkin-not-sos}. Hence Eq.~\eqref{supp:eq:main-divergence} holds.
\end{proof}

The local value in Eq.~\eqref{supp:eq:local-value} and the exact limit in Eq.~\eqref{supp:eq:cubic-limit} give
\begin{equation}
\lim_{\alpha\uparrow1}
\frac{\omega_{\rm Q}(\alpha)-\omega_{\rm loc}(\alpha)}{(1-\alpha)^3}
=\frac43.
\label{supp:eq:quantum-advantage}
\end{equation}
Combining Eqs.~\eqref{supp:eq:main-divergence} and \eqref{supp:eq:quantum-advantage},
\begin{equation}
\frac{\omega_L(\alpha)-\omega_{\rm Q}(\alpha)}
{\omega_{\rm Q}(\alpha)-\omega_{\rm loc}(\alpha)}
\longrightarrow+\infty
\label{supp:eq:relative-divergence}
\end{equation}
for every fixed $L$. Thus the ratio of the level-$L$ error to the quantum advantage is unbounded as $\alpha\uparrow1$.

\section{From Bell values to behavior sets}
\label{supp:sec:transfer}

Theorem~\ref{supp:thm:fixed-level} gives a strict gap for one Bell functional. To obtain a strict inclusion of behavior sets, we extract a normalized, nonnegative, no-signaling behavior whose Bell value exceeds the quantum maximum. We first state precisely which finite NPA constraints are covered.

\begin{definition}
\label{supp:def:finite-word-relaxation}
Fix a finite list $\mathcal S=(S_1,\ldots,S_m)$ of words in the projectors $E_{a|x}$ and $F_{b|y}$. We say that a valid no-signaling behavior $p$ belongs to $\mathcal N_{\mathcal S}$ if there exists a Hermitian linear functional $\ell$, defined on all words required to form the moment matrix, impose the identities below, and recover the observed probabilities, such that
\begin{equation}
\ell(I)=1,
\qquad
\Gamma_{\mathcal S}=\bigl(\ell(S_i^*S_j)\bigr)_{i,j=1}^{m}\succeq0,
\label{supp:eq:selected-moment-matrix}
\end{equation}
and $\ell$ is required to obey the identities generated by
\begin{align}
E_{a|x}E_{a'|x}&=\delta_{a,a'}E_{a|x},
&\sum_aE_{a|x}&=I,
\label{supp:eq:alice-projector-identities}\\
F_{b|y}F_{b'|y}&=\delta_{b,b'}F_{b|y},
&\sum_bF_{b|y}&=I,
\label{supp:eq:bob-projector-identities}\\
[E_{a|x},F_{b|y}]&=0,
&\ell(W^*)&=\overline{\ell(W)}.
\label{supp:eq:cross-and-adjoint}
\end{align}
The observed probabilities are linked to the moments by
\begin{equation}
p(a,b|x,y)=\ell(E_{a|x}F_{b|y}).
\label{supp:eq:behavior-link}
\end{equation}
\end{definition}

Every standard finite NPA level is included in this definition. If the chosen list does not constrain some of the moments in Eq.~\eqref{supp:eq:behavior-link}, those moments remain free apart from the stated identities and the requirement that $p$ is a valid behavior. Additional localizing matrices, state-optimality conditions, operator-optimality conditions, or problem-specific inequalities define different relaxations.

We first extract a valid behavior from a standard level $d$, written in the observable formulation.

\begin{lemma}[Valid behaviors at level $d\geq2$]
\label{supp:lem:valid-behavior}
Let $d\geq2$ and let $\ell$ be feasible for the level-$d$ moment matrix $\Gamma_d(\ell)$. Define the abstract projectors
\begin{equation}
E_{a|x}=\frac{e+aa_x}{2},
\qquad
F_{b|y}=\frac{e+bb_y}{2},
\label{supp:eq:abstract-projectors}
\end{equation}
and
\begin{equation}
p(a,b|x,y)=\ell(E_{a|x}F_{b|y}).
\label{supp:eq:extracted-probability}
\end{equation}
Then $p$ is nonnegative, normalized, and no-signaling.
\end{lemma}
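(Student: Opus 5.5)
The plan is to check the three properties in turn, using only the level-$d$ moment matrix $\Gamma_d(\ell)$, Hermiticity of $\ell$, and the group relations $a_x^2=b_y^2=e$ and $a_xb_y=b_ya_x$.

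\emph{Nonnegativity.} Set $W=E_{a|x}F_{b|y}=\tfrac14(e+aa_x)(e+bb_y)$. This is a linear combination of the reduced words $e,a_x,b_y,a_xb_y$, all of length at most $2\leq d$, so $W\in V_d$. The relations give $E_{a|x}^2=E_{a|x}$, $F_{b|y}^2=F_{b|y}$, and $E_{a|x}^*=E_{a|x}$. For example,
\[
(e+aa_x)^2=e+2aa_x+a^2a_x^2=2(e+aa_x).
\]
Because $E_{a|x}$ and $F_{b|y}$ commute, it follows that
\[
W^*W=F_{b|y}E_{a|x}E_{a|x}F_{b|y}=E_{a|x}F_{b|y}=W.
\]
This identity holds in the group algebra, and $\ell$ is linear on $\Span\mathcal W_{2d}$. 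Writing $W=\sum_ic_iw_i$ with $w_i\in\mathcal W_d$, we get
\[
\ell(W)=\ell(W^*W)=c^*\Gamma_d(\ell)c\geq0 .
\]
The point needing care is that $\ell(W^*W)$ must be evaluated through the reduced-word expansion of $\sum_{i,j}\bar c_ic_jw_i^*w_j$. That expansion coincides with the expansion of $W$ because products are reduced in $G$ before $\ell$ is applied. This is exactly where $d\geq2$ is used: the word $a_xb_y$ must be an index of the moment matrix. At level $1$ only the individual factors are available, so this argument would fail there.

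\emph{Normalization and no-signaling.} Both follow from linearity of $\ell$ together with the completeness identities $\sum_aE_{a|x}=e$ and $\sum_bF_{b|y}=e$. Normalization reads $\sum_{a,b}p(a,b|x,y)=\ell(e)=1$. For no-signaling, summing over $b$ gives
\[
\sum_bp(a,b|x,y)=\ell(E_{a|x})=\tfrac12\bigl(1+a\,\ell(a_x)\bigr),
\]
which does not depend on $y$. The same computation with the roles exchanged shows Bob's marginal does not depend on $x$.

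\emph{Reality.} I also need $p$ to be real. The element $W$ is self-adjoint, and $\ell$ is Hermitian, so $\ell(W)=\overline{\ell(W)}$ and $p(a,b|x,y)\in\R$.

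The only step with real content is the positivity argument, namely identifying $\ell(W)$ with a quadratic form in $\Gamma_d(\ell)$. The remaining steps are routine bookkeeping.
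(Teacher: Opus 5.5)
Your proposal is correct and follows the paper's proof: $P_{ab|xy}=E_{a|x}F_{b|y}\in V_d$ satisfies $P^*P=P$, so $\ell(P)=\ell(P^*P)\geq0$ by moment-matrix positivity, and the completeness identities give normalization and no-signaling. Your remarks on the reality of $p$ and on exactly where $d\geq2$ is needed, namely that $a_xb_y$ must index the moment matrix, are correct details that the paper leaves implicit.
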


\begin{proof}
The relations $a_x^2=b_y^2=e$ and $[a_x,b_y]=0$ imply
\begin{equation}
E_{a|x}^2=E_{a|x},
\qquad
F_{b|y}^2=F_{b|y},
\qquad
[E_{a|x},F_{b|y}]=0.
\label{supp:eq:abstract-projector-relations}
\end{equation}
Therefore the joint-event element
\begin{equation}
P_{ab|xy}=E_{a|x}F_{b|y}
\label{supp:eq:joint-event-element}
\end{equation}
is a projector:
\begin{equation}
P_{ab|xy}^*P_{ab|xy}=P_{ab|xy}.
\label{supp:eq:event-projector}
\end{equation}
It is a linear combination of words containing at most two generators, so it belongs to $V_d$, and positivity of the moment matrix applies to it:
\begin{equation}
p(a,b|x,y)=\ell(P_{ab|xy})
=\ell(P_{ab|xy}^*P_{ab|xy})\geq0.
\label{supp:eq:probability-positive}
\end{equation}
Normalization follows from
\begin{equation}
\sum_{a,b}P_{ab|xy}
=\left(\sum_aE_{a|x}\right)
\left(\sum_bF_{b|y}\right)=e,
\label{supp:eq:probability-normalization}
\end{equation}
which gives $\sum_{a,b}p(a,b|x,y)=\ell(e)=1$. Finally,
\begin{equation}
\sum_b p(a,b|x,y)
=\ell\!\left(E_{a|x}\sum_bF_{b|y}\right)
=\ell(E_{a|x}),
\label{supp:eq:no-signaling-A}
\end{equation}
which is independent of $y$; the analogous computation for Bob proves no-signaling.
\end{proof}

\begin{lemma}[Containment in a relaxation defined by a finite word list]
\label{supp:lem:finite-list-containment}
For every fixed finite list $\mathcal S$ of words in the measurement projectors in Definition~\ref{supp:def:finite-word-relaxation}, there exists $d\geq2$ such that
\begin{equation}
\mathcal N_d\subseteq\mathcal N_{\mathcal S},
\label{supp:eq:relaxation-containment}
\end{equation}
where $\mathcal N_d$ denotes the behavior set extracted from the standard level-$d$ relaxation by Lemma~\ref{supp:lem:valid-behavior}.
\end{lemma}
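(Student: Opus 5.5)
The plan is to turn a level-$d$ feasible functional $\ell$ into an $\mathcal S$-feasible functional $\ell'$ by substitution. Given a behavior $p\in\mathcal N_d$, it comes with a feasible level-$d$ functional $\ell$ on $\Span\mathcal W_{2d}$ in the observable (group) formulation. Define a linear map $\phi$ from the free algebra generated by the symbols $E_{a|x},F_{b|y}$ into the group algebra $\C G$ by sending $E_{a|x}\mapsto(e+aa_x)/2$ and $F_{b|y}\mapsto(e+bb_y)/2$, extended multiplicatively and antilinearly compatible with $^*$. Then set $\ell'(W)=\ell(\phi(W))$ for every word $W$ needed in Definition~\ref{supp:def:finite-word-relaxation}.

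\textbf{Choice of $d$.} Let $k$ be the maximal length of a word in $\mathcal S$, and take $d=\max(k,2)$. Each $\phi(S_i)$ expands into a linear combination of products of at most $k$ generators. After reduction by $a_x^2=b_y^2=e$ and commutation, each such product is a reduced word of length at most $k\le d$. Hence $\phi(S_i)\in V_d$, and the products $\phi(S_i)^*\phi(S_j)$ lie in $\Span\mathcal W_{2d}$, where $\ell$ is defined.

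\textbf{Verification, in order.}

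1. \emph{Positivity.} Write $\phi(S_i)=\sum_{u\in\mathcal W_d}C_{iu}u$. Then
\[
\Gamma_{\mathcal S}(\ell')=C\,\Gamma_d(\ell)\,C^*\succeq0,
\]
a congruence of the level-$d$ moment matrix.

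2. \emph{Normalization.} $\ell'(I)=\ell(e)=1$.

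3. \emph{Linear identities.} Projectivity, orthogonality, completeness and Alice--Bob commutation all hold exactly in $\C G$ for the images $\phi(E_{a|x})$ and $\phi(F_{b|y})$, as in Eq.~\eqref{supp:eq:abstract-projector-relations}. Any identity generated from them, applied to words $W$ where $\ell'$ is required, therefore becomes a trivial identity after applying $\phi$, and $\ell'$ respects it.

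4. \emph{Hermiticity.} This follows from $\ell(g^{-1})=\overline{\ell(g)}$ together with $\phi(W^*)=\phi(W)^*$.

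5. \emph{Recovery of $p$.} By Eq.~\eqref{supp:eq:extracted-probability}, $\ell'(E_{a|x}F_{b|y})=\ell(E_{a|x}F_{b|y})=p(a,b|x,y)$. Lemma~\ref{supp:lem:valid-behavior} guarantees that $p$ is valid, nonnegative and no-signaling, and $d\geq2$ is exactly what that lemma needs.

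\textbf{Main obstacle.} The delicate point is the domain of $\ell'$. Definition~\ref{supp:def:finite-word-relaxation} requires $\ell'$ to be defined on all words used to impose identities and recover probabilities, and these may be longer than the products $S_i^*S_j$. I would handle this by enlarging $d$ if necessary. Take $d$ at least half the maximal length of any word that the relaxation references, so that every referenced word maps into $\Span\mathcal W_{2d}$. Since the list $\mathcal S$ and its finitely many associated words are fixed, such a finite $d$ exists. Enlarging $d$ only shrinks $\mathcal N_d$, so the containment $\mathcal N_d\subseteq\mathcal N_{\mathcal S}$ is preserved.
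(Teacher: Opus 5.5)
Your proposal is correct and follows essentially the same route as the paper. Like the paper, you substitute $E_{a|x}=(e+aa_x)/2$ and $F_{b|y}=(e+bb_y)/2$, choose $d$ so that every $S_i$ lands in $V_d$, write $\Gamma_{\mathcal S}$ as a congruence of $\Gamma_d(\ell)$, and let the group relations and Lemma~\ref{supp:lem:valid-behavior} supply the identities and the valid behavior; your explicit handling of the domain of $\ell'=\ell\circ\phi$ (enlarging $d$ so that every referenced word maps into $\Span\mathcal W_{2d}$) is a harmless refinement of a point the paper leaves implicit.
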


\begin{proof}
Use Eq.~\eqref{supp:eq:abstract-projectors} to expand every projector in every word $S_i$ as a linear combination of $e,a_0,a_1,b_0,b_1$. After distributing products and reducing with the group relations, each $S_i$ is a finite linear combination of reduced words. Because the list is finite, some standard level $d\geq2$ contains every reduced word that appears in these expansions.

Let $\bm w_d$ be the vector of reduced words of length at most $d$ and write
\begin{equation}
\bm S=T\bm w_d
\label{supp:eq:S-Tw}
\end{equation}
for the matrix $T$ of expansion coefficients. For any functional feasible at level $d$,
\begin{equation}
\Gamma_{\mathcal S}
=\bigl(\ell(S_i^*S_j)\bigr)_{i,j}
=T\Gamma_d(\ell)T^*\succeq0,
\label{supp:eq:congruence}
\end{equation}
because positive semidefiniteness survives congruence. The group relations imply all the identities in Eqs.~\eqref{supp:eq:alice-projector-identities}--\eqref{supp:eq:cross-and-adjoint}, and Lemma~\ref{supp:lem:valid-behavior} supplies a valid behavior satisfying the linking relation of Eq.~\eqref{supp:eq:behavior-link}. Hence every behavior in $\mathcal N_d$ belongs to $\mathcal N_{\mathcal S}$.
\end{proof}

\begin{theorem}
\label{supp:thm:set-separation}
For every fixed finite list $\mathcal S$ of words in the measurement projectors defining $\mathcal N_{\mathcal S}$ as in Definition~\ref{supp:def:finite-word-relaxation},
\begin{equation}
\Q\subsetneq\mathcal N_{\mathcal S}.
\label{supp:eq:strict-set-inclusion}
\end{equation}
\end{theorem}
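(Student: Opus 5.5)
The inclusion $\Q\subseteq\mathcal N_{\mathcal S}$ is the easy half, and I will settle it first. By Lemma~\ref{supp:lem:finite-dim} and the common projective dilation of Section~\ref{supp:sec:bell}, every quantum behavior is approximated arbitrarily well by a finite-dimensional projective strategy $(|\psi\rangle,P_{a|x},Q_{b|y})$. For such a strategy, the functional $\ell(W)=\langle\psi|W|\psi\rangle$ on words in the projectors satisfies every identity in Eqs.~\eqref{supp:eq:alice-projector-identities}--\eqref{supp:eq:cross-and-adjoint}. Its matrix $\Gamma_{\mathcal S}$ is a Gram matrix, hence positive semidefinite, and the link~\eqref{supp:eq:behavior-link} reproduces the behavior. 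Exact quantum behaviors whose realizations need not be finite dimensional or projective are handled the same way: first Naimark-dilate in the general Hilbert space using the construction of Eq.~\eqref{supp:eq:halmos-projector}, which never used finite dimension. No approximation is then needed, since projective realizations directly give feasible functionals. (Alternatively, $\mathcal N_{\mathcal S}$ is closed, being a projection of a compact spectrahedron.)

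For strictness I plan to combine Lemma~\ref{supp:lem:finite-list-containment} with Theorem~\ref{supp:thm:fixed-level}. Lemma~\ref{supp:lem:finite-list-containment} supplies a level $d\geq2$ with $\mathcal N_d\subseteq\mathcal N_{\mathcal S}$. By Proposition~\ref{supp:prop:duality}, the primal maximum at level $d$ is attained for each $\alpha$ by some feasible $\ell_\alpha$ with $\ell_\alpha(h_\alpha)=\omega_d(\alpha)$. Lemma~\ref{supp:lem:valid-behavior} turns $\ell_\alpha$ into a valid no-signaling behavior $p_\alpha\in\mathcal N_d$.

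The one point to verify is that $h_\alpha(p_\alpha)=\ell_\alpha(h_\alpha)$. This holds because the marginals and correlators computed from $p_\alpha$ through Eqs.~\eqref{supp:eq:marginal-a}--\eqref{supp:eq:correlator} equal $\ell_\alpha(a_x)$, $\ell_\alpha(b_y)$ and $\ell_\alpha(a_xb_y)$. To see this, expand $E_{a|x}F_{b|y}=\frac14(e+aa_x+bb_y+ab\,a_xb_y)$ and sum against $a$, $b$ and $ab$ using linearity of $\ell_\alpha$.

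Theorem~\ref{supp:thm:fixed-level} at level $d$ then gives some $\alpha<1$ close to $1$ with $\omega_d(\alpha)-\omega_{\rm Q}(\alpha)>0$. Indeed, the quotient tends to $+\infty$, so it is eventually positive. For that $\alpha$, $h_\alpha(p_\alpha)>\omega_{\rm Q}(\alpha)=\sup_{\Q}h_\alpha$, so $p_\alpha\notin\Q$, while $p_\alpha\in\mathcal N_d\subseteq\mathcal N_{\mathcal S}$.

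I expect no serious obstacle: the analytic content is entirely in Theorem~\ref{supp:thm:fixed-level}. The step needing most care is the inclusion $\Q\subseteq\mathcal N_{\mathcal S}$ for arbitrary (non-projective, infinite-dimensional) strategies. I will handle it by the general Naimark dilation rather than by approximation, so that no closure argument is required.
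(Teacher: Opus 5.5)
Your proposal is correct and follows the paper's proof. For strictness you take $d$ from Lemma~\ref{supp:lem:finite-list-containment} and an attained level-$d$ optimizer from Proposition~\ref{supp:prop:duality}. You then extract the behavior of Lemma~\ref{supp:lem:valid-behavior}, verify $h_\alpha(p)=\ell(h_\alpha)=\omega_d(\alpha)$, and use Theorem~\ref{supp:thm:fixed-level} to push this above $\omega_{\rm Q}(\alpha)$.

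Your explicit Naimark-dilation argument for $\Q\subseteq\mathcal N_{\mathcal S}$ is more careful than the paper's one-line assertion. Drop the parenthetical closedness remark, though. It is not needed, and for an arbitrary list $\mathcal S$ the feasible set of $\ell$ need not be bounded (e.g.\ diagonal moments such as $\ell(E_{0|1}E_{0|0}E_{0|1})$ can be unconstrained), so "compact spectrahedron" is not justified.
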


\begin{proof}
Quantum behaviors satisfy every finite moment constraint, so $\Q\subseteq\mathcal N_{\mathcal S}$. For the strictness, choose $d\geq2$ as in Lemma~\ref{supp:lem:finite-list-containment}. By Theorem~\ref{supp:thm:fixed-level}, for $\alpha$ sufficiently close to one,
\begin{equation}
\omega_d(\alpha)>\omega_{\rm Q}(\alpha).
\label{supp:eq:strict-value-gap}
\end{equation}
The primal maximum is attained by Proposition~\ref{supp:prop:duality}; let $\ell$ be an optimizer, and let $p\in\mathcal N_d$ be the behavior extracted from it by Lemma~\ref{supp:lem:valid-behavior}. From Eq.~\eqref{supp:eq:behavior-reconstruction} and the definitions of the abstract projectors,
\begin{align}
\langle A_x\rangle_p&=\ell(a_x),
&\langle B_y\rangle_p&=\ell(b_y),
&\langle A_xB_y\rangle_p&=\ell(a_xb_y),
\label{supp:eq:moment-behavior-identity}
\end{align}
so the Bell value of the extracted behavior is
\begin{equation}
h_\alpha(p)=\ell(h_\alpha)=\omega_d(\alpha)>\omega_{\rm Q}(\alpha).
\label{supp:eq:Bell-value-identity}
\end{equation}
Thus $p\notin\Q$, while Lemma~\ref{supp:lem:finite-list-containment} gives $p\in\mathcal N_{\mathcal S}$, proving strict inclusion.
\end{proof}

\begin{corollary}[Postquantum behaviors arbitrarily close to a local behavior]
\label{supp:cor:local-accumulation}
Fix a finite list $\mathcal S$ of words in the measurement projectors as in Definition~\ref{supp:def:finite-word-relaxation}, and let
\begin{equation}
p_{\rm loc}(a,b|x,y)=\delta_{a,1}\delta_{b,1}.
\label{supp:eq:local-deterministic-behavior}
\end{equation}
Every neighborhood of $p_{\rm loc}$ contains a behavior in $\mathcal N_{\mathcal S}\setminus\Q$.
\end{corollary}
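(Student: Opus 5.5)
The plan is to dilute an optimal postquantum level-$d$ behavior with the local deterministic point $p_{\rm loc}$. Convexity of $\mathcal N_{\mathcal S}$ keeps the mixture feasible. The divergence of the level-$d$ error relative to the quantum advantage then lets the weight on the postquantum point go to zero while the Bell value still exceeds $\omega_{\rm Q}(\alpha)$.

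First, choose $d\geq2$ by Lemma~\ref{supp:lem:finite-list-containment}, so that $\mathcal N_d\subseteq\mathcal N_{\mathcal S}$. For $\alpha<1$ close to one, let $\ell_\alpha$ be an optimal level-$d$ functional (Proposition~\ref{supp:prop:duality}) and $p_{d,\alpha}\in\mathcal N_d$ the extracted behavior (Lemma~\ref{supp:lem:valid-behavior}). By Eq.~\eqref{supp:eq:Bell-value-identity}, $h_\alpha(p_{d,\alpha})=\omega_d(\alpha)$.

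Next, check that $p_{\rm loc}\in\mathcal N_d$. It is realized by the quantum strategy $A_x=B_y=1$ on $\C\otimes\C$, so $p_{\rm loc}\in\Q\subseteq\mathcal N_d$, with $h_\alpha(p_{\rm loc})=2+2\alpha$. The level-$d$ feasible set of functionals is convex, because the moment matrix is linear in $\ell$ and the PSD cone is convex. Extraction $\ell\mapsto p$ is linear, so $\mathcal N_d$ is convex.

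With $\Delta_\alpha=\omega_{\rm Q}(\alpha)-(2+2\alpha)>0$ and $e_{d,\alpha}=\omega_d(\alpha)-\omega_{\rm Q}(\alpha)>0$, define $q_\alpha$ and $\lambda_\alpha$ as in Eq.~\eqref{eq:diluted-behavior}. We have $\lambda_\alpha\in(0,1]$ once $e_{d,\alpha}\geq\Delta_\alpha$, which holds near $\alpha=1$ by Eq.~\eqref{supp:eq:relative-divergence}. Linearity gives
\[
h_\alpha(q_\alpha)=(2+2\alpha)+\lambda_\alpha(\Delta_\alpha+e_{d,\alpha})=\omega_{\rm Q}(\alpha)+\Delta_\alpha>\omega_{\rm Q}(\alpha).
\]
Hence $q_\alpha\notin\Q$, while $q_\alpha\in\mathcal N_d\subseteq\mathcal N_{\mathcal S}$.

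Finally, show $q_\alpha\to p_{\rm loc}$. We have $\lambda_\alpha=2/(1+e_{d,\alpha}/\Delta_\alpha)\to0$ by Eq.~\eqref{supp:eq:relative-divergence}. Every coordinate of $p_{d,\alpha}$ is bounded, being a probability in $[0,1]$. Therefore $\|q_\alpha-p_{\rm loc}\|\leq\lambda_\alpha\|p_{d,\alpha}-p_{\rm loc}\|\to0$ in $\R^8$ (or in the probability coordinates), and every neighborhood of $p_{\rm loc}$ contains some $q_\alpha$.

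The only delicate point is the convexity and membership step. One must make sure that the mixture is taken at the level of the feasible functionals, so that feasibility for $\mathcal N_d$ is genuine, rather than only at the level of behaviors. The remaining work is routine bookkeeping of limits.
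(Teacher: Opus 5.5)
Your proposal is correct and follows the paper's proof: the same dilution $q_\alpha=(1-\lambda_\alpha)p_{\rm loc}+\lambda_\alpha p_{d,\alpha}$ with the same weight $\lambda_\alpha$, the same Bell-value computation, and the same limit $\lambda_\alpha\to0$. The only difference is cosmetic: you prove convexity of $\mathcal N_d$ (convex feasible functionals plus linear extraction) and then use $\mathcal N_d\subseteq\mathcal N_{\mathcal S}$, whereas the paper invokes convexity of $\mathcal N_{\mathcal S}$ directly.
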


\begin{proof}
Choose $d\geq2$ as in Lemma~\ref{supp:lem:finite-list-containment}. For each $\alpha$, let $p_{d,\alpha}\in\mathcal N_d$ attain the level-$d$ value $\omega_d(\alpha)$, which exists by Proposition~\ref{supp:prop:duality}. Define
\begin{equation}
\Delta_\alpha=\omega_{\rm Q}(\alpha)-\omega_{\rm loc}(\alpha),
\qquad
e_{d,\alpha}=\omega_d(\alpha)-\omega_{\rm Q}(\alpha).
\label{supp:eq:advantage-and-excess}
\end{equation}
Both quantities are positive for $\alpha$ sufficiently close to one, and Eq.~\eqref{supp:eq:relative-divergence} gives $e_{d,\alpha}/\Delta_\alpha\to\infty$. Hence
\begin{equation}
\lambda_\alpha=\frac{2\Delta_\alpha}{\Delta_\alpha+e_{d,\alpha}}
\label{supp:eq:mixing-weight}
\end{equation}
lies in $(0,1)$ for $\alpha$ sufficiently close to one and satisfies $\lambda_\alpha\to0$. Set
\begin{equation}
q_\alpha=(1-\lambda_\alpha)p_{\rm loc}+\lambda_\alpha p_{d,\alpha}.
\label{supp:eq:mixed-postquantum-behavior}
\end{equation}
The set $\mathcal N_{\mathcal S}$ is convex, and both terms in Eq.~\eqref{supp:eq:mixed-postquantum-behavior} belong to it: $p_{\rm loc}$ is quantum, while $p_{d,\alpha}\in\mathcal N_d\subseteq\mathcal N_{\mathcal S}$ by Lemma~\ref{supp:lem:finite-list-containment}. Thus $q_\alpha\in\mathcal N_{\mathcal S}$. Since $p_{\rm loc}$ attains the local value,
\begin{align}
h_\alpha(q_\alpha)
&=\omega_{\rm loc}(\alpha)
+\lambda_\alpha\bigl[\omega_d(\alpha)-\omega_{\rm loc}(\alpha)\bigr]\nonumber\\
&=\omega_{\rm loc}(\alpha)+2\Delta_\alpha
=\omega_{\rm Q}(\alpha)+\Delta_\alpha
>\omega_{\rm Q}(\alpha).
\label{supp:eq:mixed-value}
\end{align}
Therefore $q_\alpha\notin\Q$. Finally, $\lambda_\alpha\to0$ implies $q_\alpha\to p_{\rm loc}$ in the finite-dimensional behavior space, proving the claim.
\end{proof}

\section{Consequences}
\label{supp:sec:consequences}

No NPA moment matrix built from a fixed finite list of words in the measurement projectors reproduces the complete quantum set in the $(2,2,2)$ scenario. The statement concerns the complete behavior set, including the local marginals. The four-correlator projection is different and admits an exact finite description~\cite{Landau1988,LeEtAl2023}; the extremal points of the complete set have been characterized~\cite{MikosKaniewski2023,BarizienBancal2025}.

Corollary~\ref{supp:cor:local-accumulation} places nonquantum feasible behaviors arbitrarily close to a local deterministic point. In particular, the almost quantum set, obtained from the level-$1+AB$ relaxation~\cite{NGHA2015}, strictly contains the complete quantum set already in this scenario.

The maximizing quantum strategies remain two-qubit strategies. Both measurement pairs become compatible, the exact state approaches a product state~\cite{GigenaEtAl2025}, and the Bell violation vanishes cubically. The nonquantum behaviors used in Corollary~\ref{supp:cor:local-accumulation} need not have quantum realizations; they are obtained by convexly mixing a level-$d$ optimizer with the local deterministic behavior.

On the one-sided tilted axes, the level-$1+AB$ NPA upper bound equals the exact quantum maximum for every tilt~\cite{BampsPironio2015,GigenaEtAl2025}. The state also approaches a product state there, but one party retains a maximally incompatible measurement pair. Thus weak entanglement and a small Bell violation do not imply a high NPA level. On the symmetric branch, there is no finite $L$ for which
$\omega_{\rm Q}(\alpha)e-h_\alpha\in\Sigma_L$ holds for every $\alpha$ sufficiently close to one.

The result applies to the standard NPA hierarchy based only on measurement relations and moment-matrix positivity. Equivalently, no fixed finite word list supplies an exact standard NPA certificate throughout a neighborhood ending at the symmetric critical point. The statement is uniform in $\alpha$; it does not rule out certificates whose required level increases as $\alpha\uparrow1$. Relaxations supplemented by state- or operator-optimality conditions~\cite{AraujoEtAl2026} contain additional information and are outside the theorem.

\end{document}